\setlist{leftmargin=13.4mm}
\newtheorem{lemma}{Lemma}
\newtheorem{theorem}{Theorem}
\newtheorem{observation}{Observation}
\newtheorem{lclaim}{Claim}[lemma]
\newcommand{\Oh}{\mathcal{O}}
\newcommand{\Arc}{{\sf arc}}
\newcommand{\defproblemu}[3]{
  \vspace{1mm}
\noindent\fbox{
  \begin{minipage}{0.95\textwidth}
  #1 \\
  {\bf{Input:}} #2  \\
  {\bf{Question:}} #3
  \end{minipage}
  }
  \vspace{1mm}
}
\newcommand{\defparproblem}[4]{
  \vspace{1mm}
\noindent\fbox{
  \begin{minipage}{0.96\textwidth}
  \begin{tabular*}{\textwidth}{@{\extracolsep{\fill}}lr} #1  & {\bf{Parameter:}} #3 \\ \end{tabular*}
  {\bf{Input:}} #2  \\
  {\bf{Question:}} #4
  \end{minipage}
  }
  \vspace{1mm}
}
\title{On the Parameterized Complexity of Graph Modification to first-order logic Properties\thanks{The two first authors have been supported by the Research Council of Norway via the projects ``CLASSIS'' and ``MULTIVAL". The third author has been supported  by  projects DEMOGRAPH (ANR-16-CE40-0028) and ESIGMA (ANR-17-CE23-0010). All  authors have been supported by the Research Council of Norway and the French Ministry of Europe and Foreign Affairs, via the Franco-Norwegian project PHC AURORA 2019.
\newline Emails of authors: \textsf{\{fedor.fomin, petr.golovach\}@ii.uib.no}, \textsf{sedthilk@thilikos.info}.}}
\author{
Fedor V. Fomin\thanks{
Department of Informatics, University of Bergen, Norway.} \addtocounter{footnote}{-1}
\and
Petr A. Golovach\footnotemark{}
\and 
Dimitrios M. Thilikos\thanks{AlGCo project team, CNRS, LIRMM, Universit\'e de Montpellier, Montpellier, France.}}
\begin{document}
\maketitle
\begin{abstract}
\noindent We establish  connections between parameterized/kernelization complexity of graph modification problems and expressibility in logic. For a first-order logic formula $\varphi$, we consider the problem of deciding whether an input graph can be modified by removing/adding at most $k$ vertices/edges such that the resulting modification has the property expressible by $\varphi$. We provide sufficient and necessary conditions on the structure of the prefix of $\varphi$ specifying when the corresponding graph modification problem is fixed-parameter tractable (parameterized by $k$) and when it admits a polynomial kernel.
 \end{abstract}

\noindent{\bf Keywords: }{First-order logic, graph modification, parameterized complexity, descriptive complexity, kernelization}

\section{Introduction}\label{writings}
A  variety of algorithmic graph problems, called {\em modification problems}, can be formulated as problems of modifying a graph such that the resulting graph  satisfies some fixed desired property. 
 The study of graph modification problems is one of the most popular trends in graph algorithms, and in particular, in parameterized complexity. 
 One of the classic results about graph modification problems  is the work of Lewis and Yannakakis \cite{lewis1980nodedeletion}, which  provides necessary and sufficient conditions   (assuming  ${\sf P} \neq{\sf NP}$) of  polynomial time solvability of vertex-removal problems for hereditary properties. For other types of graph modification problems, like edge-removal problems \cite{yannakakis1981edgedeletion}, no such dichotomy is known.  
 For the past 30 years  graph modification problems   served as a strong inspiration for developing
  new  methods and techniques   in 
 {parameterized/kernelization algorithms and complexity}, see the books \cite{CyganFKLMPPS15,DowneyF13,FlumG06,Niedermeierbook06} for an overview of the area.

 In this paper we approach   graph modification problems from the perspective of descriptive complexity. Descriptive complexity is the field of logic which studies the relations 
 between computational complexity  and  expressibility  in logic.  The classic example  of a theorem in descriptive complexity is the theorem of Fagin~\cite{fagin7generalized}
 asserting that a property of graphs is in {\sf NP} if and only if it is definable by an existential second-order formula.  
 We refer to the recent book of   
 Grohe  \cite{grohe2017descriptive} for a modern overview of descriptive complexity. 
 The significant amount of research in descriptive complexity is devoted to the study of prefix classes of certain logics. A prefix class is a syntactic fragment of first-order or second-order logic with formulas in prenex normal form and imposed constrains on the patterns of quantifiers in formulas.  For example, the  study of prefix classes of first-order logic is provided in the book of 
 B{\"o}rger,  Gr{\"a}del,   and Gurevich  
 \cite{borger2001classical},
see also the work of 
  Gottlob, Kolatis and Schwentick   \cite{DBLP:journals/jacm/GottlobKS04} on characterizing the computational complexity of  prefix classes of second-order logic.

    \medskip\noindent\textbf{Our results.}
  Let $\phi$ be  an FOL  formula on (undirected) graphs in  
 {prenex normal form}.  In particular,    $\phi={\tt Q}_{1}x_{1}{\tt Q}_{2}x_{2}\cdots{\tt Q}_{t}x_{t} \chi$, where $t$ is some constant,  each ${\tt Q}_i\in\{\forall,\exists\}$ is a quantifier,  $x_i$ is a variable, and $\chi$ is a quantifier-free part that depends on the variables $x_1,\ldots,x_t$.
We consider the following generic problems (we use ``$-$'' for the vertex/edge removal, ``$+$'' for the edge addition and ``$\bigtriangleup$'' for the symmetric difference). \smallskip
 
  \defparproblem{\sc Vertex-Removal to $\phi$}{A graph $G$ and an integer $k$.}{$k$}{ Does there exist  a vertex set $S\subseteq V(G)$ with $|S|\leq k$ such that 
$G- S\models \phi$?}

\defparproblem{\sc Edge-Removal to $\phi$}{A graph $G$ and an integer $k$.}{$k$}{ Does there exist  an edge set $F\subseteq E(G)$ with $|F|\leq k$ such that 
$G- F\models \phi$?
}

 \defparproblem{\sc Edge-Completion to $\phi$}{A graph $G$ and an integer $k$.}{$k$}{ Does there exist $F\subseteq {V(G)\choose 2}\setminus E(G)$ with $|F|\leq k$ such that 
$G+ F\models \phi$?}
 
  \defparproblem{\sc Edge-Editing to $\phi$}{A graph $G$ and an integer $k$.}{$k$}{ Does there exist $F\subseteq {V(G)\choose 2}$ with $|F|\leq k$ such that 
$G\bigtriangleup F\models \phi$?}

 \medskip\medskip 
\noindent  For example, for $\phi=\forall u\forall v  \neg (u \sim v)$,  
{\sc Vertex-Removal to $\phi$} is equivalent to {\sc Vertex Cover} that is the graph modification problem asking whether one can  remove at most $k$ vertices such that the resulting graph has no edges (we use $u\sim v$ for the adjacency predicate).
More generally, any vertex-removal problem to a  graph class characterized by a finite set of forbidden subgraphs, can be expressed  as {\sc Vertex-Removal to $\phi$} for some $\phi$ with only  $\forall$ quantifications over variables, where the number of variables is the maximum number of vertices of a forbidden graph. Clearly, using FOL, we are able to express other properties. For example, the property that the diameter of a graph is at most two cannot be expressed using forbidden subgraphs but can easily be written as the FOL formula $\forall u\forall v\exists w [(u=v)\vee (u\sim v)\vee ((u\sim w)\wedge(v\sim w))]$.
Similarly, the edge variants of modification problems to $\phi$ capture quite a few interesting and well-studied problems like 
  {\sc Cluster Editing}, where the task is to change at most $k$ adjacencies in the graph  resulting in a disjoint union of cliques.

We consider 
 modification problems,  where the specification of a  prefix class of  formula $\phi$ is defined  
according
 to  the  arithmetic  hierarchy (also known as  Kleene-Mostowski  hierarchy) used for classifications of the formulas in the  first-order arithmetic language (see, e.g., \cite{Smorynski77}).  
We define prefix classes    according to alternations of quantifiers, that is, switchings from $\forall$ to $\exists$ or vice versa in the prefix string of the formula. We allow a formula to have 
 \emph{free}, i.e.,  non-quantified, variables.
Let $\Sigma_0=\Pi_0$  be the classes of FOL-formulas without quantifiers. 
For a positive integer $s$, the class $\Sigma_s$ contains formulas that could be written in the form 
\[\phi=\exists x_1\exists x_2\cdots \exists x_t \psi,\]
where $\psi$ is  a $\Pi_{s-1}$-formula,  $t$ is some integer,  and $x_1,\ldots,x_t$ are free variables of $\psi$. Respectively, $\Pi_s$ consists of formulas 
\[\phi=\forall x_1\forall x_2\cdots \forall x_t \psi,\]
where $\psi$ is  a $\Sigma_{s-1}$-formula and $x_1,\ldots,x_t$ are free variables of $\psi$.
Note that we  allow $t=0$, which implies that for $s'>s$, $\Sigma_{s}\cup\Pi_s\subseteq \Sigma_{s'}\cap\Pi_{s'}$.

 We establish  a number of  algorithmic results about modification problems where the target property is  definable in  FOL. We complement these results by lower bounds, which in combination provide  a  neat 
   dichotomy  theorems about the parameterized complexity of such problems. Hence we establish sufficient and necessary conditions on the prefix classes of  FOL-formulas such that the corresponding  
    graph modification problems  are  fixed-parameter tractable and/or  admit a polynomial kernel.\medskip

Our first result shows the following {dichotomy} (subject to ${\sf W}[2]\neq{\sf FPT}$) for  {\sc Vertex-Removal to $\phi$}, depending on the structure of the prefix class of  $\phi$.

\begin{theorem}\label{magnetic}
~
\begin{itemize}
\item[(i)] For every $\phi\in \Sigma_3$ without free variables,  {\sc Vertex-Removal to $\phi$} is {\sf FPT}.
\item[(ii)] There is $\phi\in\Pi_3$ without free variables such that  {\sc Vertex-Removal to $\phi$} is ${\sf W}[2]$-hard.
\end{itemize}
\end{theorem}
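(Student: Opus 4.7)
\medskip\noindent\textbf{Proof plan for Theorem~\ref{magnetic}.}

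\smallskip\noindent\emph{Part (i): {\sf FPT} for $\Sigma_3$.} Any $\phi\in\Sigma_3$ without free variables has the shape
\[
\phi \;=\; \exists x_1\cdots \exists x_a\;\forall y_1\cdots \forall y_b\;\exists z_1\cdots \exists z_c\;\chi(\bar x,\bar y,\bar z),
\]
with $\chi$ quantifier-free and $a,b,c$ constants depending only on $\phi$. The plan is first to loop over the $O(n^a)$ assignments $\bar x\mapsto \bar a\in V(G)^a$ of the outer existential block. After fixing such a guess, it suffices to decide whether there is $S\subseteq V(G)\setminus \bar a$ with $|S|\le k$ such that $G-S$ satisfies the $\Pi_2$-formula $\forall \bar y\,\exists \bar z\,\chi(\bar a,\bar y,\bar z)$, treating $\bar a$ as protected constants. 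I would attack this $\Pi_2$-sub-problem by bounded-depth branching: at a node with current removal set $S_0$, check in time $n^{O(b+c)}$ whether the formula already holds in $G-S_0$; if not, extract a \emph{bad tuple} $\bar y_0\in (V(G)\setminus S_0)^b$ for which no $\bar z\in (V(G)\setminus S_0)^c$ satisfies $\chi(\bar a,\bar y_0,\bar z)$.

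The correctness of this branching rests on a \emph{monotonicity principle}: whenever $S\supseteq S_0$ avoids $\bar y_0$, the candidate set $(V(G)\setminus S)^c$ for $\bar z$-witnesses is contained in $(V(G)\setminus S_0)^c$, so no $\bar z$ works in the smaller graph either and $\bar y_0$ remains a $\forall$-witness of failure. Hence any valid solution must place at least one vertex of $\bar y_0\setminus \bar a$ into the removal set (if $\bar y_0\subseteq \bar a$ the branch is simply infeasible, and we discard the current $\bar a$). Branching on this choice yields a search tree of arity at most $b$ and depth at most $k$, so $b^k$ leaves in total, for an overall running time of $n^{O(a+b+c)}\cdot b^k$, which is {\sf FPT}.

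\smallskip\noindent\emph{Part (ii): ${\sf W}[2]$-hardness for some $\phi\in\Pi_3$.} The plan is to exhibit an explicit $\phi\in\Pi_3$ of shape $\forall x\,\exists y\,\forall z\,\chi(x,y,z)$ together with a parameterized reduction from a canonical ${\sf W}[2]$-hard problem such as \textsc{Set Cover} or \textsc{Dominating Set} (parameterized by solution size). Given an instance $(U,\mathcal F,k)$ of \textsc{Set Cover}, I would build a graph $G'$ containing an ``element'' gadget for each $u\in U$ and a ``set'' vertex for each $F\in\mathcal F$, with adjacency encoding membership. Each element gadget is equipped with an indestructible widget (for instance, a bundle of $k+1$ private pendants) so that no optimum removal of size $\le k$ may destroy it. The only useful deletions are therefore among the set-vertices, and they correspond one-to-one with a chosen sub-family of $\mathcal F$. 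The formula then uses $\forall x$ to range over elements, $\exists y$ to guess a neighbour of $x$ whose local gadget attests that the corresponding set vertex was removed, and $\forall z$ to verify this purely structurally in $G'-S$.

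The core obstacle, and the reason two alternations are insufficient, is that the formula sees only $G'-S$ and never $S$ itself, so the inner $\forall z$ must certify ``$y$ corresponds to a removed set'' through the \emph{shadow} that the removal leaves on the remaining graph — this is exactly what forces the third alternation and where the main gadget design effort will go. It is also precisely the point where the monotonicity argument of part~(i) breaks down: at the $\Pi_3$ level, the truth of $\exists y\,\forall z\,\chi$ at a fixed $x$ is \emph{not} monotone under further removals, because additional deletions can eliminate a previously-troublesome $z$ and thereby rescue some $y$. Thus the branching recipe from part~(i) cannot be transported to $\Pi_3$, and an independent gadget-based hardness argument is both necessary and the crux of the proof.
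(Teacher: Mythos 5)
Your plan is essentially the same as the paper's proof: guess the outer existential block $\bar a$ by brute force, then do depth-$k$, arity-$b$ branching on ``bad'' $\bar y$-tuples. One small point you should make explicit: it is not enough to observe that the $\bar z$-candidate domain $(V(G)\setminus S)^c$ shrinks as $S$ grows. You also need that for a \emph{fixed} tuple $\bar z$ that survives, the truth value of the quantifier-free $\chi(\bar a,\bar y_0,\bar z)$ is the same in $G-S_0$ and in $G-S$ — which holds because $\chi$ depends only on adjacencies and equalities among the assigned vertices, and the induced subgraph on $\bar a\bar y_0\bar z$ is unchanged. The paper states the corresponding fact (Claim~\ref{lclaim:1}-style reasoning) in exactly this form: a satisfying assignment for $\chi$ in $G-S$ is carried over verbatim to $G$, which is what contradicts $\bar y_0$ being a bad tuple. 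Apart from this omission your argument is correct and gives the same $|\phi|^k\cdot n^{\Oh(|\phi|)}$ bound.

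\paragraph{On part (ii).} Here the paper takes a genuinely different and more modular route than the one you sketch. It first proves ${\sf W}[2]$-hardness of \textsc{Edge-Removal/Editing to $\phi_0$} for a concrete $\Pi_2$-formula $\phi_0=\forall x\,\exists y_1\exists y_2\exists y_3\exists y_4\,\chi_0$ (``every vertex is the root of a fixed $5$-vertex witness pattern $W$''), via a direct reduction from \textsc{Set Cover}: element copies and per-set gadgets $H_i$ are wired so that removing the single ``missing'' edge of $H_i$ corresponds to selecting $S_i$. It then applies a generic lemma reducing \textsc{Edge-Removal to $\phi$} to \textsc{Vertex-Removal to $\psi$}: subdivide every edge, attach $k+3$ pendants to each branching vertex, and rewrite $\phi$ so that (a) original adjacency is simulated by a common neighbour, (b) variables are forced to range over original vertices using the degree-$\geq 3$ property, and (c) the absence of isolated vertices rules out deleting branching vertices. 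This syntactic transformation raises $\Pi_s$ to $\Pi_{s+1}$, so $\Pi_2$ hardness for edges yields $\Pi_3$ hardness for vertices. Your proposal instead asks for a one-shot gadget directly encoding ``set vertex was removed'' detectably in $G'-S$; this is plausible in spirit, but your sketch does not actually supply the gadget or the formula, and you explicitly flag that as the main unresolved step — so as it stands the argument has a real gap. The paper's two-step route has the additional benefit of being reusable: the same edge-to-vertex lemma also yields the $\Pi_2$ kernelization lower bound. Your closing observation — that the inner $\exists y\,\forall z$ is not monotone under further deletions, which is why the branching from part (i) cannot be transplanted — is correct and is indeed the conceptual reason the problem tips over at $\Pi_3$.
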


In other words,   if the prefix of an FOL-formula $\phi$ has at most two alternations of quantifiers and, in the case of exactly two alternations, if the first quantifier is $\exists$, then 
{\sc Vertex-Removal to $\phi$} is {\sf FPT}. For each other type of quantifier alternations,   there exists a formula for which the problem becomes ${\sf W}[2]$-hard.\medskip

 For kernelization complexity of 
 {\sc Vertex-Removal to $\phi$}, we establish the following dichotomy. %  to admit a kernel of polynomial size.

 \begin{theorem}\label{consiste}
~
\begin{itemize}
\item[(i)] For every $\phi\in \Sigma_1\cup\Pi_{1}$ without free variables,  {\sc Vertex-Removal to $\phi$} admits a polynomial kernel.
\item[(ii)] There is $\phi\in\Sigma_2$ ($\phi\in\Pi_2$) without free variables such that  {\sc Vertex-Removal to $\phi$} admits no polynomial kernel unless  ${\sf NP}\subseteq{\sf coNP}/{\sf poly}$.
\end{itemize}
\end{theorem}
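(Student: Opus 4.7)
The $\Sigma_1$ and $\Pi_1$ sub-cases behave differently. For $\phi = \exists x_1 \cdots \exists x_t\, \chi \in \Sigma_1$ without free variables, the value of the quantifier-free $\chi$ on a tuple $\bar{a}\in V(G)^t$ depends only on the equality pattern and the induced-subgraph type on the entries of $\bar{a}$. Deleting any vertex outside $\{a_1,\ldots,a_t\}$ preserves this evaluation, so $\bar{a}$ witnesses $\phi$ in $G$ iff it witnesses $\phi$ in $G-S$ whenever $S\cap\{a_1,\ldots,a_t\}=\emptyset$, and conversely every witness in $G-S$ is already a witness in $G$. Hence $G-S\models\phi$ for some $S$ with $|S|\le k$ iff $G\models \phi$, a condition decidable in time $n^{\Oh(t)}$; a constant-size kernel follows. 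For $\phi = \forall x_1 \cdots \forall x_t\, \chi \in \Pi_1$ without free variables, put $\neg \chi$ into DNF as $\bigvee_{i=1}^m D_i$. Each $D_i$ is a conjunction of equality and adjacency literals and hence specifies a ``forbidden pattern'' on at most $t$ vertices. Then $G\models\phi$ iff $G$ contains no occurrence of any $D_i$, so the problem becomes that of hitting all $\Oh(n^t)$ pattern-occurrences by removing at most $k$ vertices --- an instance of $t$-{\sc Hitting Set} with the constant $t=t(\phi)$. The classical sunflower-lemma-based kernelization of $d$-{\sc Hitting Set} then delivers a kernel of size $\Oh(k^t)$.

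\noindent\textbf{Plan for part (ii).} We describe the $\Pi_2$ case; the $\Sigma_2$ case is handled by a dual construction in which an outer existential block plays the role of a ``witness selector''. Fix a concrete $\phi\in\Pi_2$ of the form
\[
\phi \;=\; \forall x_1\cdots\forall x_s\,\exists y_1\cdots\exists y_\ell\,\theta(\bar x,\bar y),
\]
where $\theta$ is engineered so that {\sc Vertex-Removal to $\phi$} is {\sf NP}-hard and expressive enough to encode a problem known not to admit a polynomial kernel (an annotated variant of {\sc Hitting Set} or {\sc Red-Blue Dominating Set} is a natural target). To rule out a polynomial kernel we plan an OR-cross-composition (Bodlaender, Jansen, and Kratsch): given $r$ instances $(G_1,k),\ldots,(G_r,k)$ of such an {\sf NP}-hard restriction, we build a single graph $H$ with parameter $k'=\mathrm{poly}(k,\log r)$ such that $H$ admits a valid $k'$-deletion iff some $(G_i,k)$ does. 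Combined with the Fortnow--Santhanam framework, this precludes a polynomial kernel under ${\sf NP}\not\subseteq{\sf coNP}/{\sf poly}$.

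\noindent\textbf{Main obstacle.} The core difficulty is the cross-composition gadget. The $\forall\exists$ prefix demands a \emph{uniform} quantified condition on \emph{every} vertex, so simply placing the $G_i$ side by side will invariably break $\phi$ globally; one must add shared ``hub'' vertices that serve as cheap $\exists$-witnesses almost everywhere, yet can be cleanly disabled inside exactly one block by any valid deletion. Conversely, we must guarantee that a correct solution on some $G_j$ lifts to a $k'$-deletion of $H$ destroying all remaining $\forall\exists$-obstructions. Keeping the budget at $\mathrm{poly}(k,\log r)$ while strictly respecting the $\Pi_2$ (respectively, $\Sigma_2$) prefix is where the bulk of the technical work will lie.
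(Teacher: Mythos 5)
Your plan for part (i) on the $\Sigma_1$ side is exactly the paper's argument: when $\phi$ is existential without free variables, removing vertices can only destroy witnesses, so $(G,k)$ is a {\sf yes}-instance iff $G\models\phi$, which is polynomial-time checkable and gives a trivial kernel. For the $\Pi_1$ side your reduction to $t$-{\sc Hitting Set} and the sunflower kernel is also the paper's idea, but you have a genuine gap: the output of the Abu-Khzam/sunflower kernelization is an instance of {\sc Hitting Set}, \emph{not} an instance of {\sc Vertex-Removal to $\phi$}, and the kernelization definition in use requires the output to be an instance of the same problem. The paper closes this gap by invoking Lewis--Yannakakis: if the hereditary property $G\models\phi$ is nontrivial, then {\sc Vertex-Removal to $\phi$} is {\sf NP}-complete, so the polynomial-size {\sc Hitting Set} instance (which is in {\sf NP}) can be polynomially reduced back to an equivalent instance of {\sc Vertex-Removal to $\phi$}; if the property is trivial the problem is polynomial and kernels trivially. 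Without that translation step your $\Pi_1$ argument only gives a compression, not a kernel.

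For part (ii) your proposal is a high-level plan rather than a proof, and it also diverges from the paper's route in a way that leaves the hardest work undone. For the $\Sigma_2$ case the paper gives a concrete formula --- $\phi=\exists x\forall y\forall z[((x\sim y)\wedge(x\sim z))\rightarrow((y=z)\vee(y\sim z))]$, i.e.\ ``some vertex has a clique neighborhood'' --- together with a complete OR-cross-composition from {\sc Clique}: take $t$ equivalent instances on $n$ vertices, place them as disjoint blocks, attach $n-k+2$ private ``hub'' vertices to each block, and set $k'=n-k$; this is exactly the ``witness selector'' you gesture at, but the construction and the two-directional correctness argument are the substance of the proof and your proposal does not supply them. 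For the $\Pi_2$ case the paper does \emph{not} do a fresh cross-composition at all: it takes the known incompressibility of $C_4$-free edge removal (a $\Pi_1$ edge-deletion problem) and pushes it through a generic ``{\sc Edge-Removal} to {\sc Vertex-Removal}'' reduction (Lemma~\ref{dispense}) that subdivides edges, pads with pendants, and increases the quantifier-alternation depth by exactly one, yielding a $\Pi_2$ vertex-removal formula with no polynomial kernel. That reduction is the key technical lemma; you would need either it or a concrete replacement cross-composition, and as you yourself note, designing such a gadget under a rigid $\forall\exists$ prefix with budget $\mathrm{poly}(k,\log t)$ is exactly where the difficulty lies. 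As written, part (ii) is an outline of intent rather than a proof.
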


\medskip
For edge-modification problems we prove the following. 
 
 \begin{theorem}\label{narcotic} ~
 \begin{itemize}
 \item[(i)] For every  $\phi\in \Sigma_2$ without free variables,   {\sc Edge-Removal to $\phi$},   {\sc Edge-Completion to $\phi$}, and {\sc Edge-Editing to $\phi$} are  {\sf FPT}.
 \item[(ii)] There exists $\phi\in \Pi_2$ without free varaibles  such that  {\sc Edge-Removal to $\phi$} (respectively,  {\sc Edge-Completion to $\phi$} and {\sc Edge-Editing to $\phi$}) is    ${\sf W}[2]$-hard.
 \end{itemize} 
 \end{theorem}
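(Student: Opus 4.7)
For part (i), I would set up a two-layer algorithm exploiting the $\Sigma_2$ shape $\phi=\exists x_1\cdots\exists x_p\,\forall y_1\cdots\forall y_q\,\chi(\bar x,\bar y)$ with $\chi$ quantifier-free. The outer layer brute-force enumerates all tuples $\bar a\in V(G)^p$ of candidates for the existential witnesses; this gives only $n^p=\mathrm{poly}(n)$ subcases. For each fixed $\bar a$, the remaining task is to modify $G$ by at most $k$ edges so that the $\Pi_1$ sentence $\forall\bar y\,\chi(\bar a,\bar y)$ holds in the modified graph, which is exactly a finite forbidden-configuration problem with the $a_i$'s acting as labelled landmarks.

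The inner layer is a standard bounded-search-tree. Check in $O(n^q)$ time whether some $\bar b\in V(G)^q$ violates $\chi(\bar a,\bar b)$; if none exists, the guess $\bar a$ succeeds. Otherwise, since $\chi$ is a Boolean combination of equality and adjacency atoms over the variables $\bar x\cup\bar y$, and equality atoms are invariant under edge modifications, any solution must flip the truth of at least one of the $\binom{p+q}{2}$ adjacency atoms on pairs from $\bar a\cup\bar b$. Branch over these constantly many pairs---admitting only existing edges for {\sc Edge-Removal}, only non-edges for {\sc Edge-Completion}, and either for {\sc Edge-Editing}---and decrement $k$ by one in each branch. The resulting tree has depth $k$ and constant branching factor, so the algorithm runs in $\Oh^*(c^k)$ time per outer guess, and hence in FPT time overall.

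For part (ii), I would exhibit concrete $\Pi_2$ sentences for which each of the three edge-modification variants is W[2]-hard. A natural starting point for {\sc Edge-Completion} is the diameter-at-most-two property $\phi=\forall x\,\forall y\,\exists z\,[(x=y)\vee(x\sim y)\vee(x\sim z\wedge y\sim z)]$, reducing from {\sc Dominating Set}: given $(G,k)$, one attaches to $G$ an auxiliary ``far'' apparatus whose distance-two coverage of $V(G)$ in the augmented graph is forced to be achieved by an edge set encoding a size-$k$ dominating set of $G$. Dual formulas and symmetric gadgets yield analogous reductions for {\sc Edge-Removal} (where a dense host graph must be sparsified to match a covering structure) and for {\sc Edge-Editing}. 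The main obstacle I anticipate is designing these gadgets so that no ``shortcut'' solution satisfies $\phi$ without encoding a dominating set; for the diameter-two variant, a single edge from a far vertex to a universal vertex would trivially satisfy the property, so one must use parallel copies of the far vertex together with pendant blockers that preserve the $\Pi_2$ syntactic form of $\phi$ while ruling out such cheap fixes.
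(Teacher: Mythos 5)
Your argument for part (i) is correct and follows the same two-layer structure as the paper's Lemma~\ref{segments}: enumerate the $n^p$ candidate witnesses for the existential block, then branch over the $\binom{p+q}{2}$ pairs inside a violating tuple. The only cosmetic difference is that you handle \textsc{Edge-Completion} directly by restricting branches to non-edges, while the paper instead invokes Observation~\ref{adorning} (the complement trick) to reduce completion to removal; either route works, and both give $\Oh^*(c^k)$ time.

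Part (ii) has a genuine gap: you identify that your candidate diameter-two sentence admits ``shortcut'' solutions and that gadgets to block them are needed, but you stop short of constructing them. That is precisely the substantive content of the lower bound, and the difficulty you flag is real enough that the paper sidesteps it by choosing a different sentence altogether. There is also a concrete flaw in your sketch for \textsc{Edge-Removal}: under the diameter-at-most-two property, removing edges can never shorten distances, so the instance is a trivial {\sf yes} if and only if the input already has diameter at most two; your ``dual formula'' for removal would have to be something entirely different, which you do not supply. The paper instead takes $\phi=\forall x\,\exists y_1\,\exists y_2\,\exists y_3\,\exists y_4\,[\dots]$, a $\Pi_2$ sentence asserting that every vertex $x$ is the root of an induced copy of $K_5$ minus one spoke, and reduces from \textsc{Set Cover}: one gadget $H_i$ per set, $2k+1$ duplicated element-vertices adjacent to the root quadruples of the gadgets containing them, so that an element-vertex acquires a witness only when the specific edge $s_i^1s_i^4$ of some incident gadget is deleted. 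The $2k+1$ duplication kills any edit touching element-vertices, and a counting argument on the auxiliary graph $(V(G),F)$ rules out cross-gadget shortcuts; this covers \textsc{Edge-Removal} and \textsc{Edge-Editing} simultaneously, with \textsc{Edge-Completion} then following from Observation~\ref{adorning}. To complete your approach along your own lines you would need to (a) produce a concrete augmentation gadget for completion that rules out the universal-vertex shortcut, (b) give a genuinely different $\Pi_2$ sentence and reduction for removal, and (c) do the same for editing, whereas a single well-chosen local induced-subgraph formula handles all three at once.
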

  
  We observe that if $\phi\in\Sigma_1$, then all   considered problems can be solved in polynomial time. Clearly, this means that they are {\sf FPT} and have trivial polynomial kernels. We complement this with lower bounds and summarize these results in the following theorem.

  \begin{theorem}\label{unguided} ~
\begin{itemize}
 \item[(i)] For every $\phi\in \Sigma_1$ without free variables,   {\sc Edge-Removal to $\phi$},   {\sc Edge-Completion to $\phi$}, and {\sc Edge-Editing to $\phi$} admit    polynomial kernels.
 \item[(ii)] There exists $\phi\in \Pi_1$ without free variables such that  {\sc Edge-Removal to $\phi$} (respectively,   {\sc Edge-Completion to $\phi$} and {\sc Edge-Editing to $\phi$}) admits  no polynomial kernel unless  ${\sf NP}\subseteq{\sf coNP}/{\sf poly}$.
 \end{itemize}
 \end{theorem}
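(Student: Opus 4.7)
The plan is to handle (i) via polynomial-time solvability and (ii) via an appropriate $\Pi_1$ formula whose edge-modification problem is already known to be incompressible.

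For part (i), a closed $\Sigma_1$ sentence has the form $\phi=\exists x_1\cdots\exists x_t\,\chi$ with $\chi$ quantifier-free, so whether a graph satisfies $\phi$ depends only on the adjacency/equality pattern of some $t$-tuple of its vertices. My plan is to argue that each of the three edge-modification problems is polynomial-time solvable (as the paper already observes in the paragraph preceding the statement), which then yields a trivial constant-size polynomial kernel. Given $(G,k)$, I iterate over all $n^{t}$ tuples $(v_1,\ldots,v_t)\in V(G)^t$; for a fixed tuple there are at most $\binom{t}{2}$ pairs whose adjacency could be changed, hence only $2^{\binom{t}{2}}$ possible target patterns on them. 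For each pattern I test whether it satisfies $\chi$, whether it is reachable from the current pattern via at most $k$ operations permitted by the problem variant (deletions only for {\sc Edge-Removal}, insertions only for {\sc Edge-Completion}, both for {\sc Edge-Editing}), and accept iff some tuple/pattern pair works. Since $t$ is a constant fixed by $\phi$, the running time is $\Oh(n^{t})$, and the kernel is obtained by outputting any canonical yes- or no-instance of constant size after solving.

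For part (ii), my plan is to exhibit a specific $\Pi_1$ formula whose incompressibility is already known from the literature or can be derived by an OR-cross-composition. The natural family of candidates is $\phi_H=\forall x_1\cdots\forall x_t\,\neg\chi_H(x_1,\ldots,x_t)$, where $\chi_H$ asserts that the variables span (or induce) a copy of a fixed forbidden graph $H$. For an appropriate $H$ taken from the incompressibility dichotomies of Cai and Cai (and the refinements by Marx and Sandeep) on $H$-free edge-modification problems, the corresponding {\sc Edge-Removal to $\phi_H$} admits no polynomial kernel unless ${\sf NP}\subseteq{\sf coNP}/{\sf poly}$. For {\sc Edge-Completion to $\phi$}, I use the complementation trick: let $\phi$ forbid, on distinct vertices, the non-edges prescribed by $H$, which is equivalent to asserting that the complement $\overline{G}$ is $H$-free; since adding edges to $G$ corresponds to deleting edges from $\overline{G}$, the problem reduces from $H$-free edge deletion on $\overline{G}$ and inherits the lower bound. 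For {\sc Edge-Editing to $\phi_H$}, since adding edges can only create new copies of $H$ whenever $H$ has at least one edge, every optimal editing solution uses deletions only, so the edge-removal lower bound transfers directly.

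The step I expect to be the main obstacle is setting up the cross-composition for the chosen $H$. The technical task is to design gadgets, one per source instance of a canonical ${\sf NP}$-hard problem (such as a suitably restricted {\sc Red-Blue Dominating Set} or {\sc CNF-SAT} variant), whose few $H$-subgraphs can be destroyed by a small budget of edge modifications exactly when the corresponding source instance is positive, while preventing spurious $H$-copies from appearing across gadgets. Keeping the composed parameter polynomial in the size of a single source instance (rather than in their number) is the delicate point. Once the lower bound is secured for {\sc Edge-Removal}, the complementation and monotonicity arguments above transport it to {\sc Edge-Completion} and {\sc Edge-Editing}, completing the proof of the theorem.
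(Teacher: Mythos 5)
Your treatment of part (i) matches the paper's argument: a closed $\Sigma_1$ sentence $\exists x_1\cdots\exists x_t\,\chi$ is witnessed by some $t$-tuple whose internal adjacency/equality pattern satisfies $\chi$, so iterating over all $n^t$ tuples and, for each, checking in constant time whether a $\chi$-satisfying pattern on it is reachable from the current one within $k$ operations of the permitted type yields a polynomial-time algorithm, and hence a trivial constant-size kernel. This is what the paper proves in its corresponding lemma.

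Part (ii), however, contains a genuine gap in the step handling \textsc{Edge-Editing}. Your claim that ``adding edges can only create new copies of $H$ whenever $H$ has at least one edge, so every optimal editing solution uses deletions only'' is false under the induced-subgraph interpretation that the $\Pi_1$ formula $\phi_H$ necessarily encodes (a quantifier-free $\chi_H$ over vertex variables with $\sim$ and $=$ pins down an \emph{induced} pattern, not merely the presence of edges). Adding an edge can \emph{destroy} an induced copy of $H$: for instance, inserting a chord into an induced $C_4$ eliminates that induced $C_4$. Thus there is no monotonicity forcing optimal editing solutions to use deletions only, and in general $H$-free edge removal and $H$-free editing are distinct problems with separate incompressibility results in the literature. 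The paper avoids this pitfall by directly invoking the theorem of Cai and Cai, which establishes that \emph{both} \textsc{$H$-Free Edge Removal} \emph{and} \textsc{$H$-Free Editing} admit no polynomial kernel for $H=C_4$ unless ${\sf NP}\subseteq{\sf coNP}/{\sf poly}$; since ``$C_4$-free'' is a $\Pi_1$ sentence in five variables, this gives the required $\phi$ at once, and the complementation argument (the paper's Observation~\ref{adorning}) transfers the bound to \textsc{Edge-Completion} exactly as you propose. In particular, once $H=C_4$ is fixed and Cai and Cai cited, the new cross-composition that you anticipate as ``the main obstacle'' is unnecessary.
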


This paper is organized as follows. In Section~\ref{consists}, we introduce basic notions and state some auxiliary results. In Section~\ref{skeleton}, we obtain the algorithmic upper bounds, that is, we show the claims (i) of Theorems~\ref{magnetic}--\ref{unguided}. In Section~\ref{immature}, we complement these results by the lower bounds given in the claims (ii) 
of Theorems~\ref{magnetic}--\ref{unguided}. We conclude with Section~\ref{barracks}, where we discuss some possible extension of our results and mention some directions for further research. 

 \section{Preliminaries}\label{consists}

\paragraph{Sets.} 
We use $\Bbb{N}$ to denote the set of all non-negative numbers.
Given some $k\in\Bbb{N}$, we denote $[k]=[1,k]$. 
Given a set $A$, we denote by $2^{A}$ the set of all its subsets and we define 
${A \choose 2}:=\{e\mid e\in 2^{A}\wedge |e|=2\}$.
We denote by $\mathbf{a}=\langle a_1,\ldots,a_r\rangle$ a sequence of elements of a set $A$ and call $\mathbf{a}$ an \emph{$r$-tuple} of simply a \emph{tuple}. 
Note that the elements of $\mathbf{a}$ not necessarily pairwise distinct. We denote by $\mathbf{a}\mathbf{b}$ the \emph{concatenation} of tuples $\mathbf{a}$ and $\mathbf{b}$.

 \paragraph{Graphs.}
All graphs in this paper are undirected, loop-less, and without multiple edges unless it is explicitly specified to be different.  
Given a graph $G$, we denote by $V(G)$ its vertex set and by $E(G)$ its edge set. For an edge $e=\{x,y\}\in E(G)$, we use instead the notation $e=xy$, that is equivalent to  $e=yx$.
We  denote $|G|=|V(G)|$. Throughout the paper we use $n$ to denote $|G|$ if it does not create confusion.
For a vertex $v$, $d_G(v)$ denotes the degree of $v$. 
For any set of vertices $S\subseteq V(G)$, we denote by $G[S]$ the subgraph of $G$ induced by the vertices from $S$. We also define $G-S:=G[V(G)\setminus S]$. Given an edge set $F\subseteq E(G)$, we denote $G- F=(V(G),E(G)\setminus F)$. Also, given a set  $F\subseteq {V(G\choose 2}\setminus E(G)$, i.e., $F$ is a set of pairs of vertices that are not edges of $G$,  we define $G+F=(V(G),E(G)\cup F)$, and for $F\subseteq {V(G\choose 2}$,  we define $G\bigtriangleup F=(V(G),E(G)-E(G)\cap F+F\setminus E(G))$.

\paragraph{Formulas.}
In this paper we deal with logic formulas on graphs. In particular we will deal with formulas of first-order logic (FOL).  The syntax of FOL-formulas on graphs includes the logical connectives 
$\vee$, $\wedge$, $\neg$, variables for vertices, 
the quantifiers $\forall$, $\exists$ that are applied to these variables, the predicate $u \sim v$, where $u$ and $v$ are vertex variables and the interpretation is that $u$ and $v$ are adjacent, and the  equality of variables representing vertices. .
It also convenient to assume that we have the logical connectives $\rightarrow$ and $\leftrightarrow$.
An FOL-formula $\phi$ is  in {\em prenex normal form} if it is written as $\phi={\tt Q}_{1}x_{1}{\tt Q}_{2}x_{2}\cdots{\tt Q}_{t}x_{t} \chi$  where 
each ${\tt Q}_i\in\{\forall,\exists\}$ is a quantifier,  $x_i$ is a varible, and $\chi$ is a quantifier-free part that depends on the variables $x_1,\ldots,x_t$. Then
${\tt Q}_{1}x_{1}{\tt Q}_{2}x_{2}\cdots{\tt Q}_{t}x_{t}$ is
 referred as the \emph{prefix} of $\phi$. 
From now on, when we mention the term ``FOL-formula'', we mean an FOL-formula on graphs that is in prenex normal form. 
For an FOL-formula $\phi$  without free variables and a graph $G$, we write $G\models \phi$ to denote that $\phi$ evaluates to \emph{true} on $G$. 

For technical reasons, we extend FOL-formulas on graphs to  structures of a special type. We say that a pair $(G,\mathbf{v})$, where $\mathbf{v}=\langle v_1,\ldots,v_r\rangle$ is an $r$-tuple of vertices of $G$, is an \emph{$r$-structure}. Let $\phi$ be an FOL-formula without free variables and let $\mathbf{x}=\langle x_1,\ldots,x_r\rangle$ be an $r$-tuple of pairwise distinct variables  of $\phi$. We denote by $\phi[\mathbf{x}]$ the formula obtained from $\phi$ by the deletion of the quantification over $x_1,\ldots,x_r$, that is, these variables become the free variables of $\phi[\mathbf{x}]$. For an $r$-structure $(G,\mathbf{v})$ with $\mathbf{v}=\langle v_1,\ldots,v_r\rangle$ and $\phi[\mathbf{x}]$, we write $(G,\mathbf{v})\models \phi[\mathbf{x}]$ to denote that $\phi[\mathbf{x}]$ evaluates to \emph{true} on $G$ if $x_i$ is assigned $v_i$ for $i\in[r]$. If $r=0$, that is, $\mathbf{v}$ and $\mathbf{x}$ are empty, then $(G,\mathbf{v})\models \phi[\mathbf{x}]$ is equivalent to $G\models \phi$.

\paragraph{Parameterized Complexity.}  We refer to the  books~\cite{CyganFKLMPPS15,DowneyF13,FlumG06,Niedermeierbook06} for the detailed introduction to the field. Here we only 
briefly review the basic notions.

Parameterized Complexity is a  bivariate framework
for studying the computational complexity of computational problems. One variable is the input size~$n$ and the other is a \emph{parameter}~$k$ associated with the input. 
The main goal is to confine the combinatorial explosion in the running time of an algorithm for an {\sf NP}-hard problem to depend only on $k$.
Thus, a parameterized problem is defined formally as a language $\mathcal{L}\subseteq\Sigma^*\times\mathbb{N}$, where $\Sigma^*$ is a set of string over a finite alphabet $\Sigma$.
A parameterized problem is said to be \emph{fixed parameter tractable} (or {\sf FPT}) if it can be solved in time $f(k)\cdot n^{\Oh(1)}$ for some computable function~$f$. Also, we say that a parameterized problem belongs in the class ${\sf XP}$  if it can be solved in time $O(n^{f(k)})$ for some computable function~$f$. 
The complexity class {\sf FPT} consists of all fixed parameter tractable problems.
Parameterized complexity theory also provides tools to disprove the existence of {\sf FPT} algorithms under plausible complexity-theoretic assumptions. For this, Downey and Fellows introduced a hierarchy of parameterized complexity classes, namely  ${\sf FPT}\subseteq {\sf W}[1]\subseteq  {\sf W[2]}\subseteq  {\sf W[3]}\subseteq \cdots\subseteq {\sf XP}$ and conjectured that it is proper. This conjecture plays a central role in obtaining lower complexity bounds. The basic way to show that it is unlikely that a parameterized problem admit an {\sf FPT} algorithm is to show that it is ${\sf W}[1]$ or ${\sf W}[2]$-hard using a \emph{parameterized reduction} form a known ${\sf W}[1]$ or ${\sf W}[2]$-hard problem. \\

A \emph{kernelization} for a parameterized problem is a polynomial time algorithm that maps each instance $(I,k)$ of a parameterized problem with the input~$I$ and parameter~$k$ to an instance $(I',k')$ of the same problem such that
\begin{itemize}
\item[(i)] $(I,k)$ is a
{\sf yes}-instance if and only if $(I',k')$ is a {\sf yes}-instance, and
\item[(ii)] $|I'|+k'$ is bounded by~$f(k)$ for some computable function~$f$.
\end{itemize}
The output $(I',k')$ is called a \emph{kernel}. The function~$f$ is said to be the \emph{size} of the kernel. A kernel is \emph{polynomial} if~$f$ is polynomial.
While it can be shown that every decidable parameterized problem is {\sf FPT} if and only if it admits a kernel, it is unlikely that every problem in {\sf FPT} has a polynomial kernel.
In particular, the now standard \emph{composition} and \emph{cross-composition} techniques~\cite{BodlaenderDFH09,BodlaenderJK14} 
allow to show that certain problems have no polynomial kernels  unless  ${\sf NP}\subseteq{\sf coNP}/{\sf poly}$.

\medskip
To solve all considered problems, we have to solve  the {\sc Model Checking} problem for first-order logic on graphs:\medskip

  \defproblemu{\textsc{Model Checking}}{A graph $G$ and an FOL-formula $\phi$.}{Does $G\models \phi$?}\medskip

{\sc Model Checking} is known to be ${\sf PSPACE}$-complete~\cite{Vardi82}. The problem is also hard from the parameterized complexity viewpoint when parameterized by the size of the formula. It was proved by Frick and Grohe in~\cite{FrickG04} that the problem is ${\sf AW}[*]$-complete for this parametrization (see, e.g., the book~\cite{FlumG06} for the definition of the class). Thus, it is unlikely that {\sc Model Checking} is {\sf FPT} when parameterized by the formula size. This immediately implies that the problem {\sc Vertex-Removal to $\phi$} as well as  
the problems {\sc Edge-Removal/Completion}/{\sc Editing to $\phi$} are ${\sf AW}[*]$-hard when parameterized by the size of $\phi$ even for $k=0$. However, {\sc Model Checking} is in {\sf XP} when parameterized by the number of variables. In particular, if $\phi$ has $r$ variables and input size is $n$, then it can be solved in time $\Oh(n^r)$ by exhaustive search. The currently best algorithm is given by Williams in~\cite{Williams14} who proved the following.

\begin{theorem}[\!\!\cite{Williams14}]\label{appetite}
{\sc Model Checking} can be solved in time $\tilde{\Oh}(n^\omega)$ for formulas with 3 variables and if the number of variables $r\geq 3$, then it can be solved in time  $\tilde{\Oh}(n^{r-3+\omega})$ where $\omega$ is the matrix-multiplication exponent. Moreover, if $r\geq 9$, then {\sc Model Checking} can be solved in time $n^{r-1+o(1)}$.
\end{theorem}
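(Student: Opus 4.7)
The plan is to reduce first-order model checking to (rectangular) Boolean matrix multiplication, following the high-level template of Williams. First I would put $\phi$ in prenex normal form $\mathtt{Q}_1 x_1 \cdots \mathtt{Q}_r x_r\, \chi$, push negations inward so that $\chi$ is a Boolean combination of atomic predicates of the form $x_i \sim x_j$ or $x_i = x_j$, and convert $\chi$ to disjunctive normal form. The point is that each atomic predicate involves at most two variables, so every conjunct in the DNF factorises as a product (over pairs of variables) of $0/1$ matrices. Since $\chi$ has constant size, the DNF has only $O(1)$ conjuncts and the number of distinct variable pairs is bounded, which will absorb into the $\tilde{\mathcal O}$ hiding polylogarithmic overheads.

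For the case $r = 3$, the inner two quantifiers effectively ask: for each pair $(x_1,x_2)$, does there exist (or do all) $x_3$ satisfy $\chi$? Focus on the existential subproblem $\exists x_3\, \chi(x_1,x_2,x_3)$ for a single DNF disjunct; this disjunct factors into a predicate $A(x_1,x_3)$ and a predicate $B(x_3,x_2)$ (plus possibly a $C(x_1,x_2)$ factor, which can be applied afterwards). Computing, for each pair $(x_1,x_2)$, whether some $x_3$ makes $A(x_1,x_3)\wedge B(x_3,x_2)$ true is exactly a Boolean matrix product, solvable in $\tilde{\mathcal O}(n^\omega)$ via fast integer matrix multiplication. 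Summing over the $O(1)$ DNF disjuncts and then evaluating the remaining two outermost quantifiers over $(x_1,x_2)$ in time $O(n^2)$ yields the first bound. For general $r\geq 3$, exhaustively enumerate the outermost $r-3$ variables, contributing a factor $n^{r-3}$, and apply the $3$-variable algorithm to the resulting ``inner'' instance; this gives $\tilde{\mathcal O}(n^{r-3+\omega})$.

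For the stronger bound $n^{r-1+o(1)}$ when $r\geq 9$, I would replace square matrix multiplication with rectangular matrix multiplication. Partition the $r$ variables into three blocks whose sizes $a,b,c$ satisfy $a+b+c=r$; a tuple in a block of size $s$ is treated as a single ``super-variable'' ranging over $n^s$ values, and the three nested quantifiers over the blocks become a Boolean product of an $n^a\times n^b$ matrix by an $n^b\times n^c$ matrix. Invoking the best known bounds on the rectangular exponent $\omega(\alpha,\beta,\gamma)$ (in the spirit of Coppersmith and Le Gall), one chooses the block sizes so that the product cost is $n^{r-1+o(1)}$; the requirement $r\geq 9$ is exactly the threshold under which three blocks of size at least $3$ can be formed so that the relevant rectangular exponent drops below $a+b+c-1$.

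The main obstacle is the rectangular matrix multiplication step: one has to invoke non-trivial external results about $\omega(\alpha,\beta,\gamma)$ and verify that for every $r\geq 9$ there is an admissible choice of block sizes achieving the $n^{r-1+o(1)}$ bound. A secondary but genuine difficulty is the reduction of a \emph{general} quantifier-free $3$-variable formula to matrix products: one must argue that the DNF expansion, the pre- and post-multiplication by the ``diagonal'' predicates $C(x_1,x_2)$ within each conjunct, and the treatment of equality atoms (which introduce rank-one corrections) all preserve both correctness and the $\tilde{\mathcal O}(n^\omega)$ running time. Once these two points are handled, the remaining ingredients are routine: prenex normalisation, DNF conversion, and a straightforward enumeration of any ``outer'' variables beyond the three handled by the matrix product.
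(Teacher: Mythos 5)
This theorem is a citation to Williams~\cite{Williams14}; the paper does not prove it, so there is no internal proof to compare your attempt against. Judged as a reconstruction of Williams' result: your plan for the $\tilde{\Oh}(n^\omega)$ bound with $r=3$ and the $\tilde{\Oh}(n^{r-3+\omega})$ bound for general $r$ is essentially the right idea and matches the high-level template of~\cite{Williams14}: expand the constant-size quantifier-free part into a constant number of clauses, group the atoms of each clause by the pair of variables they touch, express the innermost existential quantifier as a Boolean matrix product (converting $\forall$ to $\neg\exists\neg$ as needed), enumerate the outer $r-3$ variables, and absorb the resulting unary constraints into the matrix entries.

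The $n^{r-1+o(1)}$ bound, however, has a real gap in your sketch. You propose partitioning the $r$ variables into three blocks of sizes $a,b,c$, treating each block as a ``super-variable'' over $n^a$, $n^b$, $n^c$ values, and reading off the answer from a rectangular Boolean product of an $n^a\times n^b$ matrix with an $n^b\times n^c$ matrix. This is not correct in general: the Boolean product $M[i][k]=\bigvee_j A[i][j]\wedge B[j][k]$ computes an existential quantifier over the middle index, so it would require the middle block to be governed purely by $\exists$ (or, dually after negation, purely by $\forall$). But the variables inside a block may carry alternating quantifiers, and $\forall x\,\exists y$ is neither $\exists$ nor $\forall$ over the pair $(x,y)$; it cannot be collapsed to a single level of a matrix product. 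Likewise the quantifiers on the first and third blocks must sit outside the middle-block quantifiers in the prefix, which a free partition does not guarantee. This quantifier-alternation obstruction is more fundamental than the one you flag (tuning $a,b,c$ against known bounds on the rectangular exponent $\omega(\alpha,\beta,\gamma)$): before any exponent bookkeeping, the reduction must respect the prefix structure, and Williams' argument for $r\geq 9$ is built around exactly that constraint rather than around an unconstrained three-way partition of the variables.
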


Here $\tilde{\Oh}(f(n))$ is used to denote an upper bound $\Oh(f(n)\log^c n)$ for some positive constant $c$. These algorithms are, in fact, asymptotically optimal up to the \emph{Strong Exponential Time Hypothesis} (SETH) (see~\cite{ImpagliazzoPZ01,CyganFKLMPPS15} for the definition). It was shown by Williams~\cite{Williams14} that if {\sc Model Checking} for formulas with $r\geq 4$ can be solved in time $\Oh(n^{r-1-\varepsilon})$ for some $\varepsilon>0$, then SETH is false.

Because of these results, we assume throughout the paper that the FOL-formulas in the considered modification problems have a \emph{constant} number of variables and, therefore, constant sizes. In particular, the exponents of polynomials in  running times and the sizes of kernels should depend on the length $|\phi|$ of the formula $\phi$.

\smallskip
We conclude this section by observing that Theorems~\ref{narcotic} and \ref{unguided} claim the same complexity status for  \textsc{Edge-Removal to $\phi$} and \textsc{Edge-Completion to $\phi$}.
 This is not surprising, because these problems are equivalent in the following sense. Denote by $\overline{G}$ the \emph{complement} of a graph $G$, that is, the graph with the same vertex set such that every two distinct vertices are adjacent in $\overline{G}$ if and only if they are nonadjacent in $G$. For an FOL-formula $\phi$, denote by $\overline{\phi}$ the formula obtained from $\phi$ by replacing each adjacency predicate by the subformula expressing non-adjacency of distinct vertices, that is, $u\sim v$ is replaced by $\neg(u=v)\wedge\neg(u\sim v)$. Then we can make the following straightforward observation.

\begin{observation}\label{adorning}
For every FOL-formula $\phi$, $(G,k)$ is a {\sf yes}-instance of \textsc{Edge-Removal to $\phi$} if and only if $(\overline{G},k)$ is a a {\sf yes}-instance of
\textsc{Edge-Completion to $\overline{\phi}$}.
\end{observation}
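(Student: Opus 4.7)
The plan is to reduce the observation to a single, purely semantic identity relating a graph $H$ to its complement $\overline{H}$, and then to translate an edge‐removal set $F$ in $G$ into the corresponding edge‐completion set $F$ in $\overline{G}$.

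First, I would establish the following auxiliary claim by structural induction on $\phi$: for every FOL‐formula $\phi$ (in our loop‐less, simple‐graph setting), every graph $H$, every tuple of free variables $\mathbf{x}$, and every assignment $\mathbf{v}$ of vertices to $\mathbf{x}$,
\[
(H,\mathbf{v})\models \phi[\mathbf{x}] \iff (\overline{H},\mathbf{v})\models \overline{\phi}[\mathbf{x}].
\]
The base case is the atomic predicate $u\sim v$: in $H$ this holds iff $u$ and $v$ are distinct and adjacent in $H$ (since $H$ has no loops, $u\sim v$ already forces $u\ne v$), which by the definition of $\overline{H}$ happens iff $\neg(u=v)\wedge\neg(u\sim v)$ holds in $\overline{H}$, i.e.\ iff the replacement atom evaluates to true in $\overline{H}$. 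The other atomic case $u=v$ is unchanged by the overline operation and its semantics does not depend on the graph. The inductive steps for $\neg,\vee,\wedge,\forall,\exists$ are immediate, since $\overline{\cdot}$ commutes with all logical connectives and quantifiers, and the quantifier range $V(H)=V(\overline{H})$ is the same in both graphs.

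Next, I would note the simple structural identity that for any $F\subseteq E(G)$,
\[
\overline{G-F} \;=\; \overline{G}+F,
\]
because deleting edges from $G$ is the same as adding precisely those same pairs as edges to $\overline{G}$; moreover the membership condition $F\subseteq E(G)$ is equivalent to $F\subseteq \binom{V(G)}{2}\setminus E(\overline{G})$, which is exactly the admissibility condition for an edge‐completion set in $\overline{G}$. Combining this with the semantic claim above applied to $H:=G-F$ yields
\[
G-F\models \phi \iff \overline{G}+F\models \overline{\phi},
\]
and the bijection $F\mapsto F$ between admissible edge‐removal sets in $G$ and admissible edge‐completion sets in $\overline{G}$ preserves cardinality. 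Hence $(G,k)$ is a yes‐instance of \textsc{Edge-Removal to $\phi$} iff $(\overline{G},k)$ is a yes‐instance of \textsc{Edge-Completion to $\overline{\phi}$}.

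The only slightly delicate point is the inductive step for the atom $u\sim v$, because the naive replacement $u\sim v\mapsto \neg(u\sim v)$ would be wrong: in $\overline{H}$ a vertex is not adjacent to itself, yet in $H$ it is also not adjacent to itself, so the disequality conjunct $\neg(u=v)$ is precisely what is needed to make the translation faithful under our loop‐less convention. Once this point is handled, the rest of the argument is a direct unfolding of definitions.
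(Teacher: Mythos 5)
Your proof is correct and spells out precisely what the paper treats as a ``straightforward observation'' without giving an explicit argument: the semantic equivalence $(H,\mathbf{v})\models\phi[\mathbf{x}]\iff(\overline{H},\mathbf{v})\models\overline{\phi}[\mathbf{x}]$ by induction on $\phi$, combined with the identity $\overline{G-F}=\overline{G}+F$ and the matching admissibility conditions on $F$. You also correctly flag the only subtlety, namely why the $\neg(u=v)$ conjunct in the definition of $\overline{\phi}$ is needed under the loop-less convention; this is exactly the intended (and, here, implicit) argument.
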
 

By Observation~\ref{adorning}, it is sufficient to show  Theorems~\ref{narcotic} and \ref{unguided} for  \textsc{Edge-Removal to $\phi$} and  \textsc{Edge-Editing to $\phi$}.

\section{Upper bounds}\label{skeleton}
In this section we prove the claims (i) of Theorems~\ref{magnetic}--\ref{unguided}. %We start with the special case when the considered problems can be solved in polynomial time.
First, we consider \textsc{Vertex-Removal to $\phi$}.

\begin{lemma}\label{friesian}
For every $\phi\in \Sigma_3$ without free variables, \textsc{Vertex-Removal to $\phi$} can be solved in time $|\phi|^k\cdot n^{\Oh(|\phi|)}$. 
\end{lemma}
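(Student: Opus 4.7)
\noindent The plan is to write $\phi$ in its $\Sigma_3$ prenex form
\[
\phi \;=\; \exists x_1\cdots\exists x_a\,\forall y_1\cdots\forall y_b\,\exists z_1\cdots\exists z_c\,\chi(\mathbf{x},\mathbf{y},\mathbf{z}),
\]
with $\chi$ quantifier-free and $a+b+c\le|\phi|$; then to first guess the outer existential witness $\mathbf{x}$, and afterwards to find a deletion set by a bounded branching procedure that exploits the fact that, in an optimal solution $S^*$, an inner $\exists\mathbf{z}$-witness that works in $G-S^*$ still works in $G-T$ for every $T\subseteq S^*$.

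I would first enumerate all $n^a$ tuples $\mathbf{x}\in V(G)^a$ as candidates for the outer witness. For a fixed $\mathbf{x}$, the task reduces to: find $T\subseteq V(G)\setminus\{x_1,\ldots,x_a\}$ with $|T|\le k$ such that for every $\mathbf{y}\in(V(G)\setminus T)^b$ there is some $\mathbf{z}\in(V(G)\setminus T)^c$ with $(G,\mathbf{x}\mathbf{y}\mathbf{z})\models\chi$. Call such a tuple $\mathbf{y}$ \emph{bad for $T$} when no such $\mathbf{z}$ exists; note that, because $\chi$ is quantifier-free, badness depends only on $V(G)\setminus T$. I would start the branching at $T=\emptyset$: in time $O(n^{b+c}\cdot|\chi|)=n^{O(|\phi|)}$, an exhaustive evaluation either outputs a bad $\mathbf{y}=(y_1,\ldots,y_b)$ or certifies that none exists (in which case $T$ is returned). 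In the former case, I would recurse on $T\cup\{y_i\}$ for each $i\in[b]$ with $y_i\notin\{x_1,\ldots,x_a\}$, stopping the recursion whenever $|T|>k$.

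Correctness reduces to the invariant that, assuming a solution $S^*$ with outer witness $\mathbf{x}$ exists (and necessarily $S^*\cap\{x_1,\ldots,x_a\}=\emptyset$, since $\mathbf{x}$ must lie in $V(G)\setminus S^*$), at least one branch at every level keeps $T\subseteq S^*$. Indeed, if $T\subseteq S^*$ and $\mathbf{y}$ is bad for $T$ but $\mathbf{y}\cap S^*=\emptyset$, then $\mathbf{y}\in(V(G)\setminus S^*)^b$ and, since $G-S^*\models\phi$ via $\mathbf{x}$, there is $\mathbf{z}^*\in(V(G)\setminus S^*)^c\subseteq(V(G)\setminus T)^c$ with $(G,\mathbf{x}\mathbf{y}\mathbf{z}^*)\models\chi$, contradicting the badness of $\mathbf{y}$. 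Hence some $y_i$ lies in $S^*$, which in particular puts it outside $\{x_1,\ldots,x_a\}$, and the branch that adds this $y_i$ to $T$ preserves the invariant. Coupling this with recursion depth at most $k$, branching factor at most $b\le|\phi|$, per-node work $n^{O(|\phi|)}$, and the outer $n^a$ enumeration over candidate $\mathbf{x}$'s, one obtains the claimed total running time $|\phi|^k\cdot n^{O(|\phi|)}$.

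The main obstacle I anticipate is the non-monotonicity of the target property: enlarging $T$ can turn previously good tuples into bad ones, because the inner $\exists\mathbf{z}$-witnesses can themselves be deleted, so the set of bad tuples is not monotone in $T$. The invariant $T\subseteq S^*$ is designed precisely to tame this, since $V(G)\setminus T\supseteq V(G)\setminus S^*$ means every inner witness available in $G-S^*$ remains available in $G-T$, and this is exactly what forces the branching to intersect $S^*$.
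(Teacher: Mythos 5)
Your proposal is correct and follows essentially the same approach as the paper: enumerate all candidate outer $\exists$-witness tuples, then for each fixed tuple run a depth-$k$ branching algorithm that, whenever a ``bad'' universal tuple $\mathbf{y}$ is detected, branches on adding to the deletion set one coordinate of $\mathbf{y}$ outside the fixed witness; the justification in both cases is the same monotonicity observation that an inner $\exists\mathbf{z}$-witness surviving in $G-S^*$ also survives in $G-T$ for $T\subseteq S^*$, so a bad $\mathbf{y}$ disjoint from $S^*$ would contradict $G-S^*\models\phi$.
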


\begin{proof}
Consider an instance $(G,k)$ of \textsc{Vertex-Removal to $\phi$} for 
$$\phi=\exists x_1\cdots \exists x_r\forall y_1\cdots \forall y_s\exists z_1\cdots\exists z_t\chi,$$
  where $r,s,t\geq 0$ and $\chi$ is quantifier-free. Let $\mathbf{x}=\langle x_1,\ldots,x_r\rangle$, $\mathbf{y}=\langle y_1,\ldots,y_s\rangle$,  and 
$\mathbf{z}=\langle z_1,\ldots,z_t\rangle$.

Assume that $(G,k)$ is a {\sf yes}-instance of \textsc{Vertex-Removal to $\phi$}. This means that there is $S\subseteq V(G)$ of size at most $k$ such that $G-S\models \phi$. 
Observe that $G-S\models \phi$ if and only if there is an $r$-tuple $\mathbf{u}=\langle u_1,\ldots,u_r\rangle$ of vertices of $G-S$ such that  $(G-S,\mathbf{u})\models\phi[\mathbf{x}]$. 

We use this observation, and for each $r$-tuple $\mathbf{u}$ of vertices of $G$, check whether there is $S\subseteq V(G)$ of size at most $k$ that has no common vertices with $\mathbf{u}$ and it holds that $(G-S,\mathbf{u})\models\phi[\mathbf{x}]$. If we find such a set $S$, we return this solution for the considered instance of \textsc{Vertex-Removal to $\phi$}. Otherwise, if we fail to find $S$ for all $r$-tuples $\mathbf{u}$, we conclude that $(G,k)$ is a {\sf no}-instance. 
From now we assume that $\mathbf{u}$ is given. 

Suppose that $(G,\mathbf{u})\models \phi[\mathbf{x}]$ does not hold. Then there is an $s$-tuple $\mathbf{v}=\langle v_1,\ldots,v_s\rangle$ of vertices of $G$ such that  $(G,\mathbf{u}\mathbf{v})\models \phi[\mathbf{x}\mathbf{y}]$ does not hold.  Our algorithm is based on the following crucial claim.

\begin{lclaim}\label{granting}
For every $S\subseteq V(G)$ such that $S$ is disjoint with $\mathbf{u}$ and $(G-S,\mathbf{u})\models \phi[\mathbf{x}]$, 
 $S$  contains at least one vertex of $\mathbf{v}$.
\end{lclaim}

The proof is by contradiction. Assume that $(G-S,\mathbf{u})\models \phi[\mathbf{x}]$ but $S$ and $\mathbf{v}$ are disjoint. Then $(G-S,\mathbf{uv})\models \phi[\mathbf{xy}]$. By definition, this means that there is a $t$-tuple of vertices $\mathbf{w}$ of $G-S$ such that $(G-S,\mathbf{uvw})\models \phi[\mathbf{xyz}]$. In other words, $\chi$ evaluates to \emph{true}  if the $x$, $y$ and $z$-variables are assigned to $\mathbf{u}$, $\mathbf{v}$ and $\mathbf{w}$ respectively. This immediately implies that $(G,\mathbf{uvw})\models \phi[\mathbf{xyz}]$ and, therefore, 
$(G,\mathbf{uv})\models \phi[\mathbf{xy}]$. This contradicts the assumption that $(G,\mathbf{u}\mathbf{v})\not\models \phi[\mathbf{x}\mathbf{y}]$.
 
\medskip
Claim~\ref{granting} leads to the following  recursive  algorithm that find a solution $S$ for the given $\mathbf{u}$ (if such a solution exist). The algorithm receives as the input the current set $S$ that is initially set to be empty and finds a solution $S^*\supseteq S$ as follows.  

\begin{enumerate}
\item If $(G-S,\mathbf{uv})\models \phi[\mathbf{xy}]$ for all $s$-tuples $\mathbf{v}=\langle v_1,\ldots,v_s\rangle$ of vertices of $G-S$, then return $S$ and stop.
\item Otherwise, for an $s$-tuple $\mathbf{v}=\langle v_1,\ldots,v_s\rangle$ of vertices of $G-S$ such that $(G-S,\mathbf{u}\mathbf{v})\not\models \phi[\mathbf{x}\mathbf{y}]$, do the following:
\begin{itemize}
\item[(i)] if $|S|=k$ or all the vertices of $\mathbf{v}$ are in $\mathbf{u}$, then stop;
\item[(ii)] else, for each $v_i\in\mathbf{v}$ that is not in $\mathbf{u}$, call the algorithm for $S'=S\cup\{v_i\}$.
\end{itemize}
\end{enumerate}

The correctness of the algorithm follows from Claim~\ref{granting}. Concerning the running time of the algorithm.  At each iteration we check at most $n^s$ $s$-tuples $\mathbf{v}$ and for each $\mathbf{v}$ we verify in time $n^{\Oh(|\phi|)}$ whether  $(G-S,\mathbf{uv})\models \phi[\mathbf{xv}]$. Hence, each iteration takes  time $n^{\Oh(|\phi|)}$. Also at each iteration we branch into at  most $s$ subproblems and the depth of the search tree produced by the algorithm is at most $k$. Thus the running time is $s^k\cdot n^{\Oh(|\phi|)}$. Recall that we call the algorithm for each $r$-tuple $u$. Since there are $n^r$ such tuples, we have that the total running time is $s^k\cdot n^{\Oh(|\phi|)}$, which can be rewritten as $|\phi|^k\cdot n^{\Oh(|\phi|)}$.
\end{proof}

 Lemma~\ref{friesian} immediately implies Theorem~\ref{magnetic} (i).

We move to  {\sc Edge-Removal to $\phi$} and {\sc Edge-Editing to $\phi$}.

\begin{lemma}\label{segments}
For every $\phi\in \Sigma_2$ without free variables, \textsc{Edge-Removal to $\phi$} and \textsc{Edge-Editing to $\phi$} can be solved in time $|\phi|^{2k}\cdot n^{\Oh(|\phi|)}$. 
\end{lemma}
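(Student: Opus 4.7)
The plan is to mimic the branching-on-a-witness approach of Lemma~\ref{friesian}, exploiting that in $\Sigma_2$ the innermost quantifier block collapses to a quantifier-free matrix. Write $\phi=\exists x_1\cdots \exists x_r\forall y_1\cdots\forall y_s\chi$ with $\chi$ quantifier-free, and set $\mathbf{x}=\langle x_1,\ldots,x_r\rangle$, $\mathbf{y}=\langle y_1,\ldots,y_s\rangle$. As a first step I would guess an $r$-tuple $\mathbf{u}$ of vertices of $G$ that is to serve as the witness of the outer $\exists$-block; there are $n^r$ such tuples, and it suffices to decide, for each fixed $\mathbf{u}$ in time $|\phi|^{2k}\cdot n^{\Oh(|\phi|)}$, whether one can modify at most $k$ edges so that the resulting graph $H$ satisfies $(H,\mathbf{u})\models\phi[\mathbf{x}]$.

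For a fixed $\mathbf{u}$ I maintain a partial modification set $F$ (initially $\emptyset$) and a current graph $H=G- F$ in the removal setting, or $H=G\bigtriangleup F$ in the editing setting. Using Theorem~\ref{appetite}, one checks in time $n^{\Oh(|\phi|)}$ whether $(H,\mathbf{uv})\models\phi[\mathbf{xy}]$ for every $s$-tuple $\mathbf{v}$ of vertices of $H$; if so we return $F$ as a solution. Otherwise we fix a violating $\mathbf{v}$ and let $A=\{u_1,\ldots,u_r,v_1,\ldots,v_s\}$. The analogue of Claim~\ref{granting} now reads: every completion $F^*\supseteq F$ with $|F^*|\le k$ that solves the instance satisfies $(F^*\setminus F)\cap\binom{A}{2}\neq\emptyset$. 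This is because $\chi$ is quantifier-free in the variables $\mathbf{x}\mathbf{y}$, so its truth value at $(\mathbf{u},\mathbf{v})$ depends only on the adjacencies among the (at most $r+s$) vertices in $A$; since $\chi$ is false at $(\mathbf{u},\mathbf{v})$ in $H$ but must be true in the final modified graph, the two graphs must differ on some pair inside $A$, and that pair belongs to $F^*\setminus F$.

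I then branch on which pair of $\binom{A}{2}$ is added to $F$: at most $\binom{r+s}{2}\le|\phi|^{2}$ choices per node (in the edge-removal case one further restricts to pairs that currently lie in $E(H)$). The recursion depth is at most $k$, so the search tree has size at most $|\phi|^{2k}$. Each node performs one model-check in time $n^{\Oh(|\phi|)}$, giving total time $|\phi|^{2k}\cdot n^{\Oh(|\phi|)}$ per choice of $\mathbf{u}$; the $n^{r}\le n^{\Oh(|\phi|)}$ factor coming from the outer enumeration of $\mathbf{u}$ is absorbed in the claimed bound.

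The step that requires the most care is the quantifier-free observation underlying the boxed claim, which is precisely where the $\Sigma_2$ assumption is essential; once $\chi$ contains further alternating quantifiers, a failure of $\chi$ at $(\mathbf{u},\mathbf{v})$ no longer localises to the edges within $A$ and the branching argument collapses. The restriction to extensions $F^*\supseteq F$ is harmless by the standard exhaustive-branching argument: the tree explores all branchings, and on the leaf where all the right pairs have been chosen one has $F=F^*$ and the algorithm halts successfully. Beyond this, no ingredient is needed apart from Theorem~\ref{appetite} and the same bookkeeping already used for Lemma~\ref{friesian}.
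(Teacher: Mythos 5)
Your proposal is correct and follows essentially the same approach as the paper's proof: guess the existential witnesses $\mathbf{u}$, repeatedly model-check, and on a violating $\mathbf{v}$ branch over the pairs inside $\mathbf{u}\mathbf{v}$, justified by the observation that the truth of the quantifier-free $\chi$ at $(\mathbf{u},\mathbf{v})$ depends only on adjacencies within those $r+s$ vertices. The only (cosmetic) difference is that you phrase the key claim in terms of extensions $F^*\supseteq F$ of the current partial solution, whereas the paper states it as a single claim about the original graph and reapplies it recursively; both yield the same branching rule and the same $\binom{r+s}{2}^{k}\cdot n^{\Oh(|\phi|)}$ bound.
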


\begin{proof}
The proof is similar to the proof of Lemma~\ref{friesian}. We show the claim for \textsc{Edge-Removal to $\phi$}  and then explain how it should be modified for \textsc{Edge-editing to $\phi$}.

Let $(G,k)$ be an instance of  \textsc{Edge-Removal to $\phi$} for 
$$\phi=\exists x_1\cdots \exists x_r\forall y_1\cdots \forall y_s\chi,$$
where $\chi$ is quantifier-free. Let $\mathbf{x}=\langle x_1,\ldots,x_r\rangle$ and $\mathbf{y}=\langle y_1,\ldots,y_s\rangle$.

We observe that $F\subseteq E(G)$ of size at most $k$ is a solution for an instance $(G,k)$ of \textsc{Edge-Removal to $\phi$} if and only if  there is an $r$-tuple $\mathbf{u}=\langle u_1,\ldots,u_r\rangle$ of vertices of $G$ such that  $(G-F,\mathbf{u})\models\phi[\mathbf{x}]$. Respectively, for each $r$-tuple $\mathbf{u}$ of vertices of $G$, we check whether there is $F\subseteq E(G)$ of size at most $k$ such that  $(G-F,\mathbf{u})\models\phi[\mathbf{x}]$. If we find such $F$, we return this solution and we obtain that $(G,k)$ is a {\sf no}-instance otherwise. Assume that $\mathbf{u}$ is given.

If the property $(G,\mathbf{u})\models \phi[\mathbf{x}]$ is not fulfilled, then there is an $s$-tuple $\mathbf{v}=\langle v_1,\ldots,v_s\rangle$ of vertices of $G$ such that it  does not hold  that $(G,\mathbf{u}\mathbf{v})\models \phi[\mathbf{x}\mathbf{y}]$.  We use the following claim.

\begin{lclaim}\label{pensions}
For every $F\subseteq E(G)$ such that  $(G-F,\mathbf{u})\models \phi[\mathbf{x}]$,  
 $F$  contains at least one edge with both end-vertices in $\mathbf{uv}$.
\end{lclaim}

To obtain a contradiction, assume that $(G-F,\mathbf{u})\models \phi[\mathbf{x}]$ but every edge of $F$ has at least one end-vertex outside the tuples $\mathbf{u}$ and $\mathbf{v}$. This means that $\chi$ evaluates to \emph{true} on $G-F$ if the $x$ any $y$-variables are assigned to $\mathbf{u}$ and $\mathbf{v}$ respectively.  Notice that every two vertices of $\mathbf{uv}$ are adjacent in $G-F$ if and only if they are adjacent in $G$. Hence, $\chi$ evaluates to \emph{true} on $G$ if the $x$ any $y$-variables are assigned to $\mathbf{u}$ and $\mathbf{v}$ respectively. This means that $(G,\mathbf{u}\mathbf{v})\models \phi[\mathbf{x}\mathbf{y}]$; a contradiction.

\medskip
We construct the following  recursive branching algorithm that finds a solution $F$ for the given $\mathbf{u}$ if it exists. The algorithm takes as the input the current set $F$ that is initially empty and finds a solution $F^*\supseteq F$:  

\begin{enumerate}
\item If $(G-F,\mathbf{uv})\models \phi[\mathbf{xy}]$ for all $s$-tuples $\mathbf{v}=\langle v_1,\ldots,v_s\rangle$ of vertices of $G$, then return $F$ and stop.
\item Otherwise, for an $s$-tuple $\mathbf{v}=\langle v_1,\ldots,v_s\rangle$ of vertices of $G$ such that $(G-F,\mathbf{u}\mathbf{v})\not\models \phi[\mathbf{x}\mathbf{y}]$, do the following:
\begin{itemize}
\item[(i)] set $L\subseteq E(G)$ be the set of edges with both end-vertices  in $\mathbf{uv}$,
\item[(ii)] if $|F|=k$ or $L=\emptyset$, then stop;
\item[(iii)] else,  and  for each $e\in L$, call the algorithm for $F'=F\cup\{e\}$.
\end{itemize}
\end{enumerate}

The correctness of the algorithm follows from Claim~\ref{pensions}. 
   On each iteration we check at most $n^s$ $s$-tuples $\mathbf{v}$, and for each $\mathbf{v}$, verify in time $n^{\Oh(|\phi|)}$ whether  $(G-F,\mathbf{uv})\models \phi[\mathbf{xv}]$. Hence, each iteration can be done in time $n^{\Oh(|\phi|)}$. Also on each iteration we have at most $|L|\leq \binom{r+s}{2}$ branches and the depth of the search tree produced by the algorithm is at most $k$. This implies that the running time is $(r+s)^{2k}\cdot n^{\Oh(|\phi|)}$. Recall that we call the algorithm for each $r$-tuple $u$. Since there are $n^r$ such tuples, we have that the total running time is  $|\phi|^{2k}\cdot n^{\Oh(|\phi|)}$.

For \textsc{Edge-Editing to $\phi$}, the algorithm is essentially the same.
The difference is that in addition to edge removal we allowed to add edges.  
Respectively, we replace $G-F$ by $G\bigtriangleup F$ in the above algorithm and modify Steps 2 (i)---(iii):
\begin{itemize}
\item[(i)] set $L$ be the set of pairs of distinct  vertices of $\mathbf{uv}$,
\item[(ii)] if $|F|=k$ or $L=\emptyset$, then stop;
\item[(iii)] else,  and  for each $e\in L$, call the algorithm for $F'=F\cup\{e\}$.
\end{itemize}
Notice that the variant of Claim~\ref{pensions} , where  $G-F$ is replaces by $G\bigtriangleup F$, holds and this implies correctness.
The time analysis is the same.
\end{proof}

Lemma~\ref{segments} together with Observation~\ref{adorning} implies Theorem~\ref{narcotic}(i). Our next aim is show kernelization upper bounds. First, we observe that for $\Sigma_1$-formulas, our problems can be solved in polynomial time.

\begin{lemma}\label{presents}
For every $\phi\in \Sigma_1$ without free variables, \textsc{Vertex-Removal to $\phi$},    {\sc Edge-Removal to $\phi$},    and {\sc Edge-Editing to $\phi$} can be solved in time $n^{\Oh(|\phi|)}$. 
\end{lemma}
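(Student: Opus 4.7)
The plan is to exploit the fact that a $\Sigma_1$-formula $\phi=\exists x_1\cdots\exists x_r\chi$ without free variables is witnessed by a single $r$-tuple of vertices, and the quantifier-free part $\chi$ evaluated on a tuple $\mathbf{u}$ depends only on the equalities and adjacencies among the (constantly many) coordinates of $\mathbf{u}$. Since $r$ is a constant, we can afford to enumerate all $n^r$ candidate witness tuples $\mathbf{u}=\langle u_1,\ldots,u_r\rangle$ and, for each of them, enumerate all at most $2^{\binom{r}{2}}$ Boolean assignments to the pairwise-adjacency atoms over $\mathbf{u}$.

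For \textsc{Vertex-Removal to $\phi$}, I first observe that for any $S\subseteq V(G)$ disjoint from a tuple $\mathbf{u}$, the adjacencies among the entries of $\mathbf{u}$ coincide in $G$ and in $G-S$, hence $(G-S,\mathbf{u})\models \chi[\mathbf{x}]$ iff $(G,\mathbf{u})\models \chi[\mathbf{x}]$. Consequently $G-S\models \phi$ can hold only when some witness already exists in $G$, and in that case $S=\emptyset$ is itself a solution. Thus $(G,k)$ is a yes-instance iff $G\models \phi$, which can be decided in time $n^{\Oh(|\phi|)}$ by Theorem~\ref{appetite} (or by the straightforward $\Oh(n^r)$ enumeration over tuples).

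For \textsc{Edge-Removal to $\phi$} and \textsc{Edge-Editing to $\phi$}, the algorithm enumerates each $r$-tuple $\mathbf{u}$, and for each one enumerates all Boolean assignments $\sigma$ to the atoms $\{u_i\sim u_j:u_i\ne u_j\}$ such that substituting $\sigma$ into $\chi[\mathbf{u}]$ yields \emph{true} (the equality atoms are fixed by $\mathbf{u}$ itself). For such a pair $(\mathbf{u},\sigma)$ the minimum cost is read off locally: for \textsc{Edge-Editing} it equals the number of pairs $\{u_i,u_j\}$ on which $\sigma$ disagrees with the current adjacency in $G$; for \textsc{Edge-Removal} we additionally require that every pair $\sigma$ marks as adjacent is already an edge of $G$ (otherwise $\sigma$ is unachievable by deletion), and the cost is the number of pairs $\sigma$ marks non-adjacent that are currently edges. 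We output \emph{yes} iff some pair $(\mathbf{u},\sigma)$ achieves cost at most $k$.

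The step that needs the most care is arguing that this enumeration is exhaustive: if $F$ is any feasible modification, then some witness $\mathbf{u}$ exists in the modified graph, and the induced adjacency pattern on $\mathbf{u}$ is precisely one of the assignments $\sigma$ examined by the algorithm; moreover restricting $F$ to pairs inside $\mathbf{u}$ only decreases $|F|$ while still making $\mathbf{u}$ a witness, since atoms outside $\mathbf{u}$ do not appear in $\chi[\mathbf{u}]$. This justifies minimising the cost independently for each $\mathbf{u}$. The total running time is $n^r\cdot 2^{\Oh(r^2)}\cdot\mathrm{poly}(n,|\phi|)=n^{\Oh(|\phi|)}$, as required.
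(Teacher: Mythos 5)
Your proof is correct and follows essentially the same approach as the paper: for vertex removal both arguments note that $G-S\models\phi$ forces $G\models\phi$ since an existential witness survives induced-subgraph passage, and for the edge problems both localize the modification to the at most $r$ vertices of a witness tuple and then brute-force over the constantly many possibilities. The only cosmetic difference is that you enumerate target adjacency patterns $\sigma$ and read off the edit cost directly, whereas the paper enumerates modification sets within the induced subgraph on a candidate $r$-set; both give the same $n^{\Oh(|\phi|)}$ bound.
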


\begin{proof}
Assume that 
$$\phi=\exists x_1\cdots\exists x_r\chi$$
where $\chi$ is quantifier-free.

For \textsc{Vertex-Removal to $\phi$}, it is sufficient to observe that $(G,k)$ is a {\sf yes}-instance of the problem if and only if $G\models\phi$. We can use Theorem~\ref{appetite} and solve the problem in time $n^{\Oh{|\phi|}}$. 

For {\sc Edge-Removal to $\phi$} and {\sc Edge-Editing to $\phi$}, we can observe that $(G,k)$ is a {\sf yes}-instance if and only if there is a set of vertices $U$ of size $s=\min\{r,n\}$ 
such that $(G[U],k)$ is a {\sf yes}-instance. We can check all such sets $U$ in time $n^{\Oh(|\phi|)}$, and for each set, we use  brute force to verify whether  $(G[U],k)$ is a {\sf yes}-instance. Since the brute force checking of all subsets of edges or pairs of vertices of $G[U]$ of size at most $k'=\min\{k,\binom{s}{2}\}$ can be done in time $s^{2k'}$, the total running time is $s^{2s^2}\cdot n^{\Oh(|\phi|)}$. Because $s\leq|\phi|$, we can write it as $n^{\Oh(|\phi|)}$.
\end{proof}

Because every problem that can be solved in polynomial time has a trivial polynomial kernel, Lemma~\ref{presents} together with Observation~\ref{adorning} implies Theorem~\ref{unguided}(i). Clearly, the lemma also implies the claim of Theorem~\ref{consiste} (i) for $\Sigma_1$-formulas. It remains to prove it for $\Pi_1$-formulas. For this, we need the classic result of Lewis and Yannakakis~\cite{lewis1980nodedeletion}. A graph property $P$ is said to be \emph{hereditary} if for each graph $G$ satisfying $P$, it hold that $P$ holds for every induced subgraph of $G$. A property $P$ is \emph{nontrivial} if it is true for infinitely many graphs and it is false for infinitely many graphs. \textsc{Vertex-Removal to $P$} asks, given a graph $G$ and a positive integer $k$, whether it is possible to remove at most $k$ vertices of $G$ to obtain a graph satisfying $P$.
It was proved by Lewis and Yannakakis~\cite{lewis1980nodedeletion} that the following dichotomy holds for a hereditary property $P$ that can be tested in polynomial time: \textsc{Vertex-Removal to $P$} can be solved in polynomial time if $P$ is trivial, and the problem is {\sf NP}-complete otherwise.

\begin{lemma}\label{teaching}
For every $\phi\in \Pi_1$ without free variables,  {\sc Vertex-Removal to $\phi$} admits a polynomial kernel.
\end{lemma}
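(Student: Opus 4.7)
The plan is to observe that for $\phi\in\Pi_{1}$ the property $G\models\phi$ is equivalent to forbidding a finite family of bounded-size induced substructures, to reduce \textsc{Vertex-Removal to $\phi$} to $r$-\textsc{Hitting Set} on a hypergraph of rank at most $r$ over $V(G)$, and then to invoke the classical sunflower-lemma kernelization, returning an induced subgraph of $G$ as the kernel.

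Writing $\phi=\forall x_{1}\cdots\forall x_{r}\chi$ with quantifier-free $\chi$, I call an $r$-tuple $\mathbf{v}=\langle v_{1},\ldots,v_{r}\rangle\in V(G)^{r}$ \emph{bad} if $(G,\mathbf{v})\not\models\chi[\mathbf{x}]$. Since $\chi$ is quantifier-free, being bad depends only on the adjacencies and equalities within $\{v_{1},\ldots,v_{r}\}$, so $G\models\phi$ iff $G$ has no bad tuple, and the property is hereditary. For a bad $\mathbf{v}$, let $B(\mathbf{v})=\{v_{1},\ldots,v_{r}\}$, a set of size at most $r$; then $S\subseteq V(G)$ is a solution iff $S$ intersects $B(\mathbf{v})$ for every bad tuple $\mathbf{v}$. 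Enumerating all $n^{r}$ tuples in polynomial time, I build the hypergraph $\mathcal{H}=\{B(\mathbf{v}):\mathbf{v}\text{ bad in }G\}$ of rank at most $r$, so that $(G,k)$ becomes equivalent to asking whether $\mathcal{H}$ admits a hitting set of size at most $k$.

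I then apply the standard sunflower reduction: while some $k+2$ hyperedges of $\mathcal{H}$ form a sunflower with core $C$, delete one of them. This is safe because any hitting set of $\mathcal{H}$ of size at most $k$ must meet $C$ (it cannot hit $k+1$ pairwise disjoint petals) and hence also hits the deleted edge. After exhaustive reduction, the Sunflower Lemma of Erdős and Rado gives $|\mathcal{H}|\leq r!(k+1)^{r}=\Oh(k^{r})$, so $V^{\ast}=\bigcup\mathcal{H}$ has size $\Oh(k^{r})$, and I output the kernel $(G[V^{\ast}],k)$, whose total size is polynomial in $k$ since $r$ is a constant depending only on $\phi$. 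For correctness, because $\chi$ is quantifier-free the bad tuples of $G[V^{\ast}]$ are precisely the bad tuples of $G$ supported in $V^{\ast}$; therefore a solution $S$ of $(G,k)$ restricts to the solution $S\cap V^{\ast}$ of $(G[V^{\ast}],k)$, and conversely any solution $S'$ of $(G[V^{\ast}],k)$ has size $\leq k$, hits every hyperedge of the reduced $\mathcal{H}$ (all of which lie in $V^{\ast}$), and by soundness of the sunflower rule hits every bad tuple of $G$, solving $(G,k)$.

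The step I expect to require the most care is this last correctness argument, specifically verifying that passing from $G$ to $G[V^{\ast}]$ does not ``strand'' any bad tuple of $G$ whose support lies in $V^{\ast}$ but which was discarded from $\mathcal{H}$ during the sunflower reductions. The soundness of the sunflower rule is exactly what handles this: every size-$\leq k$ hitting set of the reduced hypergraph automatically hits every hyperedge deleted along the way, and combined with the hereditary nature of $\phi$ this closes the equivalence between $(G,k)$ and the kernel $(G[V^{\ast}],k)$.
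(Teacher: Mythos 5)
Your reduction to $r$-\textsc{Hitting Set} via the hypergraph of bad-tuple supports is exactly the one the paper uses, but the final step is genuinely different, and both routes are sound. The paper invokes Abu-Khzam's hitting-set kernel as a black box and then, after a case split on whether the property $G\models\phi$ is trivial and an appeal to Lewis--Yannakakis to establish \textsf{NP}-completeness of \textsc{Vertex-Removal to $\phi$}, uses the generic ``polynomially reduce back to an \textsf{NP}-hard problem'' trick to re-encode the shrunk hitting-set instance as an instance of the original problem. You instead run the sunflower reduction directly and exploit the fact that, since $\chi$ is quantifier-free, a tuple's status as bad is determined by the induced subgraph on its support; this lets you output $(G[V^{\ast}],k)$ itself as an explicit kernel without ever leaving \textsc{Vertex-Removal to $\phi$}. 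Your route avoids Lewis--Yannakakis and the trivial/nontrivial case analysis altogether and yields a constructive kernel, whereas the paper's route is shorter on the page and inherits a slightly sharper size bound from Abu-Khzam but leaves the re-encoding step nonconstructive. Two details worth pinning down in a polished write-up: the sunflower bound should be applied per hyperedge cardinality, since the hypergraph has rank at most $r$ but need not be uniform (giving $\sum_{i\leq r} i!(k+1)^{i}=\Oh(k^{r})$ rather than literally $r!(k+1)^{r}$); and the backward direction of the equivalence between $(G,k)$ and $(G[V^{\ast}],k)$ really does rest on combining sunflower soundness with the heredity of $\Pi_1$-properties, which is precisely the subtlety you correctly flagged at the end.
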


\begin{proof}
Let $(G,k)$ be an instance of {\sc Vertex-Removal to $\phi$}  for 
$$\phi=\forall x_1\cdots\forall x_r\chi$$
where $\chi$ is quantifier-free. Let $\mathbf{x}=\langle x_1,\ldots,x_r\rangle$.
Observe that the graph property $G\models\phi$ is hereditary for $\Pi_1$-formulas. If this property is trivial, we can solve {\sc Vertex-Removal to $\phi$} in polynomial time~\cite{lewis1980nodedeletion} and conclude that the problem admits a trivial polynomial kernel. Assume from now that the property $G\models\phi$ is not trivial. By the result of Lewis and Yannakakis~\cite{lewis1980nodedeletion}, {\sc Vertex-Removal to $\phi$} is {\sf NP}-complete.

For every tuple $\mathbf{v}$ of vertices of $G$, denote by $U_{\mathbf{v}}$ the set of vertices contained in $\mathbf{v}$. Let
$$\mathcal{U}=\{U_{\mathbf{v}}\mid \mathbf{v}\text{ is an }r\text{-tuple of vertices of  }G\text{ such that }(G,\mathbf{v})\not\models \phi[\mathbf{x}]\}.$$
The crucial observation is that $S\subseteq V(G)$ of size at most $k$ is a solution for $(G,k)$ if and only if $S$ is a \emph{hitting set} for $\mathcal{U}$, that is, $S\cap U_{\mathbf{v}}\neq\emptyset$ for every $U_{\mathbf{v}}\in\mathcal{U}$.  This observation is proved by the same arguments as Claim~\ref{granting} assuming that $\mathbf{u}$ is empty.

The \textsc{$s$-Hitting Set} problem that asks, given a family of sets $\mathcal{U}$ of size at most $s$ over some universe and a non-negative integer $k$, whether there is a hitting set  $S$ for $\mathcal{U}$ in known to have a polynomial kernel of size at most $(2s-1)k^{s-1}+k$ by the result of Abu-Khzam~\cite{ABUKHZAM2010524}.  Apparently \textsc{Hitting Set} is in {\sf NP}. Hence, there is a polynomial reduction from  \textsc{Hitting Set} to the {\sf NP}-complete problem {\sc Vertex-Removal to $\phi$}. This implies that 
{\sc Vertex-Removal to $\phi$} admits a polynomial kernel. 
\end{proof}

\section{Lower bounds}\label{immature}
Here we prove the hardness claims of Theorems~\ref{magnetic}--\ref{unguided}. In Subsection~\ref{daylight}, we give the technical result about reducing \textsc{Edge-Removal to $\phi$} to \textsc{Vertex-Removal to $\psi$}. In Subsection~\ref{begrudge}, we show ${\sf W}[2]$-hardness and in Subsection~\ref{holiness} we obtain kernelization lower bounds.

\subsection{Reducing Edge Removal to Vertex Removal}\label{daylight}
In this section we construct a generic reduction of \textsc{Edge-Removal to $\phi$} to \textsc{Vertex-Removal to $\psi$} that we use twice in the proofs of our complexity lower bounds. 

We say that an FOL-formula $\phi$
is {\em $\forall$-containing} if the prefix of $\phi$ contains a $\forall$ quantifier.  

\begin{lemma}\label{dispense}
For every $\forall$-containing  FOL-formula $\phi\in \Sigma_{s}\cup \Pi_{s}$ without free variables for $s\geq 1$, there is formula $\psi\in \Sigma_{s+1}\cup \Pi_{s+1}$ without free variables such that
\begin{itemize}
\item[(i)]  $|\psi|={\sf poly}(|\phi|)$,
\item[(ii)]  if $\phi\in\Sigma_s$ (resp. $\phi\in\Pi_s$), then $\phi\in\Sigma_{s+1}$ (resp.  $\phi\in\Pi_{s+1}$),
\item[(iii)] there is a 
polynomial reduction of \textsc{Edge-Removal to $\phi$} to \textsc{Vertex-Removal to $\psi$} that transforms 
each instance $(G,k)$ of  \textsc{Edge-Removal to $\phi$}  to an equivalent instance  $(G',k)$ of \textsc{Vertex-Removal to $\psi$}, i.e.,  the parameter $k$ remains the same.
\end{itemize}
\end{lemma}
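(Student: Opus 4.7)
The strategy is the classical edge-subdivision reduction: each edge $e=uv$ of $G$ is represented in $G'$ by a fresh \emph{subdivider} vertex $w_e$ adjacent to $u$ and $v$, while the original edge $uv$ is deleted. Deleting $e$ in $G$ then corresponds to deleting $w_e$ in $G'$, so the parameter $k$ is preserved. The formula $\psi$ will rewrite $\phi$ so that its quantifiers range only over the ``originals'' (the vertices of $G'$ that encode $V(G)$) and so that every adjacency literal $x\sim y$ is rephrased as ``$x$ and $y$ share a surviving subdivider neighbour.''

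\textbf{Step 1 (construction of $G'$).} The bare subdivision does not let us tell originals from subdividers cheaply in first-order logic, and it does not stop a vertex-removal solution from cheating by deleting originals instead of subdividers. We therefore decorate. To each original $v\in V(G)$ we attach a constant-size \emph{$O$-identifier} gadget (for concreteness, a $K_4$ with three new private vertices), and to each subdivider $w_e$ a constant-size \emph{$S$-identifier} gadget of a structurally different shape (say, a private $C_5$). The predicates ``$x$ lies in an $O$-identifier and has a neighbour outside it'' and ``$x$ lies in an $S$-identifier and has a neighbour outside it'' can then be written as $\Sigma_1$ formulas $\pi_O(x)$ and $\pi_S(x)$ whose sizes depend only on $|\phi|$. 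To rule out solutions that delete originals (which would correspond to vertex- rather than edge-deletion on $G$), we furthermore replace every $v\in V(G)$ by a clique of $k+1$ clones and attach $k+1$ parallel copies of the $O$-identifier to each clone-set; any deletion of size $\leq k$ therefore leaves each clone-set with at least one living clone carrying an intact $O$-identifier.

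\textbf{Step 2 (construction of $\psi$).} Put $\phi=Q_1\mathbf{x}_1\cdots Q_s\mathbf{x}_s\,\chi$ into prenex form with $\chi$ in negation normal form. For each variable, replace its quantifier $\forall x$ by $\forall x\,(\pi_O(x)\to\cdots)$ and $\exists x$ by $\exists x\,(\pi_O(x)\wedge\cdots)$. Replace each positive literal $x\sim y$ in $\chi$ by $\exists w\,(\pi_S(w)\wedge x\sim w\wedge w\sim y)$ and each negative literal $\neg(x\sim y)$ by $\forall w\,(\pi_S(w)\to\neg(x\sim w\wedge w\sim y))$. Since $\pi_O,\pi_S\in\Sigma_1$, the guards only enlarge existing quantifier blocks without introducing new alternations. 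The fresh witnesses $w$ produced by the literal rewriting all use independent variables, so they can be grouped into a single innermost $\exists$-block (from positive literals) and a single innermost $\forall$-block (from negative literals) that commute with each other; one then pulls these two new blocks into prenex order, choosing their sequence so that the first of them fuses with the innermost block of $\phi$'s prefix. This adds exactly one new quantifier alternation. The $\forall$-containing hypothesis is what makes this work: it rules out $\phi\in\Sigma_1$ (the pure-$\exists$ case, for which a newly-introduced $\forall$ could not be absorbed), and a short case analysis on $\Sigma_s$ vs.\ $\Pi_s$ and on the parity of $s$ confirms that the resulting $\psi$ lies in $\Sigma_{s+1}$ when $\phi\in\Sigma_s$ and in $\Pi_{s+1}$ when $\phi\in\Pi_s$. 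All ingredients are of size polynomial in $|\phi|$, hence so is $|\psi|$.

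\textbf{Step 3 (equivalence and the main obstacle).} For the forward direction, given $F\subseteq E(G)$ with $|F|\leq k$ and $G-F\models\phi$, set $S=\{w_e:e\in F\}$; the bijection between the living originals of $G'-S$ and $V(G)$, together with the identity ``two originals share a subdivider neighbour in $G'-S$ iff they are adjacent in $G-F$'', yields $G'-S\models\psi$. For the backward direction, given $S\subseteq V(G')$ with $|S|\leq k$ and $G'-S\models\psi$, we \emph{normalise} $S$ into a set $S'\subseteq\{w_e:e\in E(G)\}$ with $|S'|\leq |S|$ and $G'-S'\models\psi$; then $F=\{e:w_e\in S'\}$ is the desired edge-removal. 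The main technical obstacle is precisely this normalisation: one must show, case by case, that deleting a non-subdivider vertex is either wasted (the clone-and-identifier protection keeps $\pi_O$ alive) or is equivalent to deleting a single subdivider (destroying one $S$-identifier has the same effect on $\pi_S$ as deleting the corresponding $w_e$). A secondary bookkeeping concern is the prenex analysis of Step 2, which must be verified separately for every parity of $s$ and each of $\Sigma_s,\Pi_s$ to certify that the new $\exists$- and $\forall$-blocks coming from the literal rewriting really do increase the alternation count by at most one.
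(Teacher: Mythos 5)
Your overall strategy---subdivide every edge, re-express adjacency as ``share a surviving subdivider neighbour,'' restrict quantifiers to the ``original'' vertices, and protect originals from removal---matches the paper's proof. However, the particular protection mechanism you introduce (cloning each original vertex into a clique of $k+1$ copies) introduces a genuine bug that invalidates the forward direction as you state it.

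The problem is that two clones $v_1,v_2$ of the same vertex $v$ are distinct objects in $G'$, yet your rewritten formula is supposed to simulate $\chi$ evaluated ``at $v$'' regardless of which clone is picked. Concretely, under your rewriting: the literal $x=y$ evaluates to \emph{false} when $x=v_1,\,y=v_2$, even though semantically they encode the same vertex; and the rewritten $x\sim y$ (``share a subdivider'') evaluates to \emph{true} whenever $v$ has an incident surviving edge, even though $v\not\sim v$. Your claimed ``bijection between the living originals of $G'-S$ and $V(G)$'' does not exist (it is $(k+1)$-to-$1$), and the claimed identity ``two originals share a subdivider neighbour in $G'-S$ iff they are adjacent in $G-F$'' fails precisely on pairs of clones of the same vertex. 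A minimal counterexample: take $\phi=\forall x\forall y\,[(x=y)\vee(x\sim y)]$ (which is $\forall$-containing and in $\Pi_1$), $G=K_1$, $k=1$, $F=\emptyset$. Then $G-F\models\phi$, but with $S=\{w_e:e\in F\}=\emptyset$ the clones $v_1,v_2$ are distinct and share no subdivider, so $G'\not\models\psi$. You could try to repair this by adding a ``not a clone of the same original'' test to the literal rewriting (direct adjacency in the clone-clique is detectable, since genuine $G$-edges are subdivided), but that is exactly the kind of extra machinery the cloning was supposed to save you, and you do not do it.

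The paper avoids this entirely and is both simpler and correct: no cloning, each original vertex remains a single \emph{branching} vertex of degree at least $3$ (thanks to $k+3$ pendants), subdividers have degree $2$, so ``is an original'' is a $\Sigma_1$ property expressed by three distinct neighbours. Removal of a branching vertex is forbidden not by redundancy but by detection: removing a branching vertex creates isolated pendants, and the paper splices $\forall s_1\exists s_2\,(s_1\sim s_2)$ into the prefix of $\psi$ at a position chosen so as not to create a new alternation. This also makes the backward direction direct---no normalisation of $S$ is needed, since the no-isolated-vertex test forces $S$ to contain only subdividers and pendants, and deleting pendants is provably inert (Claim~4.2 in the paper). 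Your backward direction, by contrast, rests on a ``normalisation'' of $S$ that you explicitly flag as the main obstacle and do not carry out; given the bugs above, it is not clear it can be carried out as stated. Finally, your $K_4$/$C_5$ identifier gadgets are more heavyweight than necessary: the paper's degree threshold does the same job with three fresh variables per quantifier.

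Your prenexification analysis (fresh witness variables for positive literals become a single new $\exists$-block, fresh variables for negative literals a single new $\forall$-block, fuse the first of them with the innermost block of the original prefix) is sound and is essentially the paper's Step~1; the paper writes out the two cases (${\tt Q}_p=\exists$ vs.\ ${\tt Q}_p=\forall$) explicitly so the bookkeeping you defer is done there.
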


\begin{proof} 
 Let $(G,k)$ be an instance of   \textsc{Edge-Removal to $\phi$}.
We construct the instance $(G',k)$ of \textsc{Vertex-Removal to $\psi$} from $(G,k)$
and then we construct $\psi$. The main idea is to replace edge removals by vertex removals switching to the \emph{incidence} graph of $G$ or, equivalently,  by subdividing edges of $G$. Then we have to ``label'' the vertices of the original graph that should not be removed. We do it by making them adjacent to sufficiently many pendant vertices. Formally, we construct $G'$ as follows. 
\begin{itemize}
\item Construct a copy of $G$ and subdivide each edge, that is, for each $e=xy\in E(G)$, delete $e$, construct a new vertex  $v_e$ and make it adjacent to $x$ and $y$. We say that the vertices of $G$ are \emph{branching} vertices and the vertices obtained by the edge subdivisions are called \emph{subdivision} vertices.
\item For each branching vertex $u$, introduce $k+3$ new vertices $v_1,\ldots,v_{k+3}$ and make them adjacent to $u$; we call these vertices \emph{pendant}.
\end{itemize} 
Notice that every subdivision vertex has degree 2 and every branching vertex has degree at least 3. Moreover, if $H$ is obtained from $G'$ by the removal of at most $k$ vertices, then still every remaining branching vertex has degree at least 3. Observe also that $H$ has isolated vertices if and only if at least one branching vertex of $G$ is removed in the construction of $H$.

Our next aim is to construct $\psi$ from $\phi$ to ensure that $(G,k)$ is a {\sf yes}-instance of \textsc{Edge-Removal to $\phi$}  if and only if $(G',k)$ is a {\sf yes}-instance of \textsc{Vertex-Removal to $\psi$}.
Let
$$\phi={\tt Q}_1x_1\ldots{\tt Q}_px_p\chi$$
where ${\tt Q}_1,\ldots,{\tt Q}_p$ are quantifiers, $x_1,\ldots,x_p$ are variables and $\chi$ is quantifier-free. We also assume that $\ \chi$ is written in the conjunctive normal form.

The construction of $\psi$ is done in several steps. First, we take care of adjacencies in $\phi$. Recall that two verices $u$ and $v$ are adjacent in $G$ if and only if they have a common neighbor in $G'$. Respectively, we modify the adjacency predicates in $\chi$.

Let $\Pi=\{\pi_1,\ldots,\pi_s\}$ be the family (multiset) of all predicates of the form $x_i\sim x_j$ that occur in $\ \chi$ without negations. If the same predicate $x_i\sim x_j$ occurs several times, then for each occurrence, we include it in $\Pi$. Similarly, let $\bar{\Pi}=\{\bar{\pi}_1,\ldots,\bar{\pi}_t\}$ be the family (multiset) of all predicates of the form $\neg (x_i\sim x_j)$ that occur in $\ \chi$. Then we do the following.
\begin{itemize}
\item Construct $s$ new variables $y_1,\ldots,y_s$ and $t$ variables $z_1,\ldots,z_t$.
\item For each $h\in \{1,\ldots,s\}$, consider $\pi_h= x_i\sim x_j$ for some $i,j\in\{1,\ldots,p\}$ and replace it by $\neg(x_i=x_j)\wedge (x_i\sim y_h)\wedge (y_h\sim x_j)$.
\item  For each $h\in \{1,\ldots,t\}$, consider $\bar{\pi}_h=\neg (x_i\sim x_j)$ for some $i,j\in\{1,\ldots,p\}$ and replace it by $(x_i=x_j)\vee\neg (x_i\sim y_h)\vee\neg (y_h\sim x_j)$.
\item Denote the formula obtained from $\ \chi$ by $\ \chi'$. Then
\begin{itemize}
\item if ${\tt Q}_p=\exists$, set $\ \sigma=\exists y_1\ldots\exists y_s\forall z_1\dots\forall z_t\ \chi'$, and
\item if ${\tt Q}_p=\forall$, set $\ \sigma=\forall z_1\dots\forall z_t\exists y_1\ldots\exists y_s\ \chi'$.
\end{itemize}
\end{itemize}
Consider the formula
$\alpha={\tt Q}_1x_1\ldots{\tt Q}_px_p\ \sigma$. 
Observe that we added new quantified variables in the end of the prefix of $\phi$ in such a way that we obtain at most one additional alternation of quantifiers. That is, we obtain that $\alpha\in\Sigma_{s+1}$ if $\phi\in\Sigma_s$ and $\alpha\in\Pi_{s+1}$ if $\phi\in\Pi_s$.
The crucial property of the above construction is given in the following straightforward claim. 

\begin{lclaim}\label{connects}
$G\models \phi$ if and only if $G'\models \beta$ where $$\beta={\tt Q}_1x_1\ldots{\tt Q}_px_p\ \sigma  \mbox{ with the domains of the variables $x_{1},\dots,x_{p}$  restricted to } V(G).$$ Moreover, for every set of pendant vertices $X$ of size at most $k$, 
$G\models \phi$ if and only if $(G'-X)\models \beta$. 
\end{lclaim}

Notice that in the formula of Claim~\ref{connects} we insist to  restrict the domains of the variables $x_1,\ldots,x_p$ in $\beta$ to $V(G)$, that is, 
 $\beta$ is not an FOL-formula (and $\beta\neq \alpha$). 
Our next aim is to express these additional constraints in the first-order logic.  We do it using the property that the branching vertices of $G'$ have degrees at least 3 and all the other vertices have degrees at most 2. 

We consecutively construct the formulas $\rho_{n+1},\ldots,\rho_1$. First, we set $\rho_{n+1}=\ \sigma$. Note that $x_1,\ldots,x_p$ are free variable for $\rho_{n+1}$. Assume inductively that $1\leq i\leq p$ and $\rho_{i+1}$ with free variables $x_1,\ldots,x_i$ is already constructed.  Denote by $P_{i+1}$ the prefix and $\mu_{i+1}$ the quantifier-free part  
of $\rho_{i+1}$ respectively, that is, $\rho_{i+1}=P_{i+1}\mu_{i+1}$. The construction of $\rho_i$ depends on the quantifier ${\tt Q}_i$. 
\begin{itemize}
\item Introduce 3 new variables $r_i^1,r_i^2,r_i^3$.
\item If ${\tt Q}_i=\exists$, then set 
\begin{align*}
\rho_i=\exists x_i\exists r_i^1\exists r_i^2\exists r_i^3 P_{i+1}[&(\neg(r_i^1=r_i^2)\wedge \neg(r_i^1=r_i^3)\wedge\neg(r_i^2=r_i^3))\wedge \\
& ((x_i\sim r_i^1) \wedge (x_i\sim r_i^2)\wedge (x_i\sim r_i^3))\wedge \mu_{i+1}].
\end{align*}
\item If ${\tt Q}_i=\forall$, then set 
\begin{align*}
\rho_i=\forall x_i\forall r_i^1\forall r_i^2\forall r_i^3 P_{i+1}[&((\neg(r_i^1=r_i^2)\wedge \neg(r_i^1=r_i^3)\wedge\neg(r_i^2=r_i^3))\wedge \\
& \big(((x_i\sim r_i^1)\wedge (x_i\sim r_i^2)\wedge (x_i\sim r_i^3))\rightarrow \mu_{i+1}\big)].
\end{align*}
\end{itemize}
Let $\gamma=\rho_1$. 
Note that in our construction of $\gamma$, we do not create new alternations of quantifications, that is, $\gamma\in \Sigma_{s+1}$ or in $\Pi_{s+1}$ depending on whether $\alpha\in\Sigma_{s+1}$ or in $\Pi_{s+1}$. 
The construction of $\gamma$ implies the next claim.

\begin{lclaim}\label{handling}
$G\models \phi$ if and only if $G'\models \gamma$. Moreover, for every set of pendant vertices $X$ of size at most $k$, 
$G\models \phi$ if and only if $(G'-X)\models \gamma$. 
\end{lclaim}

Recall that the removal of the edges in $G$ corresponds to the removal of subdivision vertices of $G'$. Respectively, our next aim is to ensure that the removal of a branching vertex of $G'$ leads to the graph for which our formula is false. We use the property that for every set $S$ of at most $k$ vertices of $G'$, $G'-S$ has an isolated vertex if and only if $S$ contains a branching vertex. We use the property that the condition that a graph has no an isolated can be expressed by the formula $\forall s_1\exists s_2\ (s_1\sim s_2)$ and 
modify $\gamma$ as follows.  Let $P$ be the prefix of $\gamma$ and let $\mu$ be the quantifier-free part. We write $P$ as the concatenation of 3 parts $P_1$, $P_2$ and $P_3$ where $P_1$ and/or $P_3$ may be empty. Recall that the prefix of the original formula $\phi$ contains the $\forall x_i$ for some $i\in\{1,\ldots,p\}$ by the condition of the lemma. Hence, the same holds for $\gamma$. Let $P_1$ be the first part of $P$ until the first occurrence of the quantifier $\forall$. Then $P_2$ is the next part until the first occurrence of the quantifier $\exists$ or until the end of $P$ if such a quantifier does not exist. Respectively, $P_3$ is the remaining part. 
We define 
$$\psi=P_1\forall s_1 P_2 \exists s_2 P_3\ [(s_1\sim s_2)\wedge \mu]$$
using two new variables $s_1$ and $s_2$.

Notice that the insertion of the new quantifications is done in such a way that we do not introduce new alternations of quantifiers unless $P_3$ is empty. But if $P_3$ is empty, then 
$\phi\in \Pi_1$ or $\phi\in \Sigma_2$ and the new alternations were not introduced in the construction of $\alpha$ from $\phi$. We have that either $\phi,\gamma\in \Sigma_s$ and $\psi\in \Sigma_{s+1}$ or $\phi,\gamma\in\Pi_s$ and $\psi\in\Pi_{s+1}$.

We show the following claim.

\begin{lclaim}\label{softened}
The instance $(G,k)$ is a {\sf yes}-instance of  \textsc{Edge-Removal to $\phi$} if and only if 
$(G',k)$ is a {\sf yes}-instance of \textsc{Vertex-Removal to $\psi$}. 
\end{lclaim}

To show the claim, assume first that $(G,k)$ is a {\sf yes}-instance of \textsc{Edge-Removal to $\phi$}. Then there is $F\subseteq E(G)$ of size at most $k$ such that 
$(G-F)\models\phi$. We define $S\subseteq V(G')$ be the the set of the subdivision vertices of $G'$ corresponding to the edges of $F$, that is, $S=\{v_e\mid e\in F\}$.
Clearly, $|S|\leq k$. Notice that our reduction algorithm for graphs produces $G'-S$ from $G-F$. Hence, by Claim~\ref{handling}, $(G'-F)\models \gamma$.
Because $G'-S$ has no isolated vertices, we have that $(G'-S)\models \psi$. It means that $(G',k)$ is a {\sf yes}-instance of \textsc{Vertex-Removal to $\psi$}. 

Suppose now that $(G',k)$ is a {\sf yes}-instance of \textsc{Vertex-Removal to $\psi$}. Then there is $S\subseteq V(G')$ of size at most $k$ such that $(G'-S)\models \psi$. Since 
$\psi=P_1\forall s_1 P_2\exists s_2 P_3\ [(s_1\sim s_2)\wedge \mu]$, we have that $G'-S$ has no isolated vertices, that is, $S$ contains only subdivision vertices of $G'$ or pendants vertices. 
Also this immediately implies that $(G'-S)\models\gamma$.  Let $S'$ be the set of subdivision vertices of $S$ and let $X$ be the set of pendant vertices in $S$.  
We define $F$ to be the set of edges of $G$ corresponding to the subdivision vertices in $S'$, that is, $F=\{e\in E(G)\mid v_e\in S'\}$. 
We have that $|F|\leq k$. Let also $X$ be the set of pendant vertices in $S$. Observe again that  our reduction algorithm for graphs produces $G'-S'$ from $G-F$. Then by Claim~\ref{handling}, we have that $(G-F)\models \phi$. We conclude that $(G,k)$ is a {\sf yes}-instance of \textsc{Edge-Removal to $\phi$}. 

\medskip
To complete the proof of the lemma, observe that the size of $\psi$ is polynomial in the size of $\phi$ by our construction of the formula and this shows (i). 
To show (ii), observe  that $\psi\in\Sigma_{s+1}$ if $\phi\in\Sigma_s$ and $\psi\in\Pi_{s+1}$ if $\phi\in\Pi_s$.
The last claim (iii) of the lemma immediately follows from 
Claim~\ref{softened}.
\end{proof}

\subsection{W[2]-hardness}\label{begrudge}
In this subsection we show that there are formulas $\phi$ in $\Pi_2$ and $\Pi_3$
for which 
 \textsc{Edge-Removal (Editing) to $\phi$} and \textsc{Vertex-Removal to $\psi$}. 
respectively are ${\sf W}[2]$-hard when parameterized by $k$. 
First, we show the claim for  \textsc{Edge-Removal to $\phi$} and  \textsc{Edge-Editing to $\phi$}.

\begin{lemma}\label{bringing}
There is an FOL-formula $\phi\in \Pi_2$ without free variables with $5$ variables such that  \textsc{Edge-Removal to $\phi$} and  \textsc{Edge-Editing to $\phi$} are ${\sf W}[2]$-hard. 
\end{lemma}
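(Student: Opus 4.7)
The plan is to reduce from the canonical ${\sf W}[2]$-hard problem \textsc{Set Cover} parameterized by the solution size. Given an instance $(U, \mathcal{F}, k)$, I would construct a graph $G'$ together with a fixed formula $\phi \in \Pi_2$ on five variables so that a set cover of size $k$ exists iff $G'$ admits a $k$-edge removal (respectively, a $k$-edge edit) making $\phi$ true.

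Concretely, $G'$ has an ``element'' vertex $e_u$ for each $u \in U$, a ``set'' vertex $s_F$ for each $F \in \mathcal{F}$, a single anchor vertex $a$ adjacent to every set vertex, the edges $s_F e_u$ whenever $u \in F$, and constant-size pendant gadgets attached to each vertex whose shapes unambiguously label its type (element, set, or anchor). The intended solution is to remove precisely the edges $a s_F$ for $F$ in a size-$k$ set cover $\mathcal{F}'$, thus disconnecting the anchor from exactly the chosen sets. The formula takes the shape
\[
\phi \;=\; \forall x\, \exists y_1 \exists y_2 \exists y_3 \exists y_4\, \chi(x, y_1, y_2, y_3, y_4),
\]
where the quantifier-free matrix $\chi$ encodes the implication ``if $x$ is an element-type vertex, then some $y_1$ adjacent to $x$ is a set-type vertex that is no longer adjacent to the anchor'', with $y_2, y_3, y_4$ serving as local gadget witnesses for the type predicates and for identifying the anchor itself. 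Since $\Pi_2$ allows one universal block followed by one existential block, $\phi$ fits in $\Pi_2$ on five variables.

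The main obstacle is to cram the three structural labels (element, set, anchor) \emph{and} the predicate ``$y_1$ is nonadjacent to the anchor'' into a single existential block without introducing a second quantifier alternation or exceeding five variables. I would address this by giving the anchor a distinctive constant-size local gadget that can be witnessed jointly by just two or three of the existential variables, and by choosing the element/set markers so that ``$x$ is an element'' can be read off from a single suitable neighbor of $x$. With the construction in place, the forward correctness (cover $\Rightarrow$ removal solution) is routine; conversely, if $\phi$ holds in $G' - F$, then for every $e_u$ the existential witness $y_1$ must be a set vertex containing $u$ that lost its edge to $a$, forcing $F$ to contain at least one anchor-set edge covering each element and yielding a set cover of size at most $k$.

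For \textsc{Edge-Editing to $\phi$}, I would inflate the gadgets with enough parallel pendant copies so that no edit inside a gadget can corrupt a type label within budget $k$, and arrange the anchor gadget so that adding edges cannot both preserve the labeling and satisfy $\phi$. A short exchange argument then lets us assume that every edit-solution of size at most $k$ consists solely of anchor-to-set edge removals, reducing \textsc{Edge-Editing to $\phi$} to the same \textsc{Set Cover} instance and giving ${\sf W}[2]$-hardness for both problems simultaneously.
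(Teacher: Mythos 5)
Your high-level plan matches the paper's: both reduce \textsc{Set Cover} to a $\Pi_2$ formula $\forall x\,\exists y_1\exists y_2\exists y_3\exists y_4\,\chi$ on five variables, and both simultaneously handle removal and editing. The designs of $\chi$ and of the graph, however, diverge, and it is precisely in the place you flag as ``the main obstacle'' that your sketch has a genuine gap.

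Your $\chi$ is an implication guarded by a type predicate: ``if $x$ is element-type, then some adjacent $y_1$ is a set vertex no longer adjacent to the anchor.'' This requires the existential block to simultaneously (a) certify that $x$ is element-type, (b) certify that $y_1$ is set-type, (c) pick out the unique anchor, and (d) check nonadjacency of $y_1$ to that anchor. Three spare variables $y_2,y_3,y_4$ are asked to do (a), (b), and (c) at once, and it is far from clear this fits. Worse, the guard ``$x$ is element-type'' makes $\chi$ vacuously true whenever the guard fails; a budget-$k$ editing solution can then try to falsify element-type labels instead of solving cover, and your only defence is the unquantified remark that one should ``inflate the gadgets with enough parallel pendant copies'' and a ``short exchange argument.'' Those are exactly the steps that need to be carried out, and the implication-shaped formula makes them delicate, because any local edit near a gadget risks flipping a type predicate rather than only touching anchor--set edges.

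The paper avoids all of this by never encoding types. Its $\chi$ simply asserts that $x$ and $y_1,\ldots,y_4$ induce a fixed $5$-vertex pattern $W$ (namely $K_5$ minus one edge), so the formula reads ``every vertex is the root of an induced $W$.'' There is no guard to falsify and no anchor to identify. Each set gadget $H_i$ is built so that (i) every vertex of $H_i$ already has its own $W$-witness inside $H_i$, (ii) the four root vertices of $H_i$ form a $K_4$, which is \emph{not} the correct four-vertex pattern, and (iii) removing one specific root edge turns that $K_4$ into $K_4$ minus an edge, which \emph{is} the pattern. Each element $u_j$ is represented by $2k+1$ copies adjacent to the roots of every $H_i$ with $u_j\in S_i$, so at least one copy is untouched by any size-$k$ edit, and that copy can only obtain its $W$-witness from a broken root $K_4$, i.e.\ from a chosen set. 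The converse direction is handled by a clean degree-sum counting argument over the edit set. In short, your reduction skeleton is the right one, but the formula you chose forces you to solve a type-encoding and anchor-identification problem that you have not solved and that the paper deliberately sidesteps; before your argument could be accepted, you would need to exhibit the gadgets and the exact quantifier-free matrix and verify the existential block really does fit in four variables, and you would need to carry out the exchange argument for the editing case in full.
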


\begin{proof}
We define the formula $\phi$ as follows:
\begin{align*}
\phi=\forall x\exists y_1\exists y_2\exists y_3\exists y_4[& (x\sim y_1)\wedge (x\sim y_2)\wedge (x\sim y_3)\wedge (x\sim y_4)\wedge (y_1\sim y_2)\wedge(y_1\sim y_3)\wedge\\
&(y_2\sim y_3)\wedge(y_2\sim y_4)\wedge(y_3\sim y_4)\wedge \neg(y_1=y_4)\wedge\neg(y_1\sim y_4)]. 
\end{align*}
In terms of graphs, $G\models\phi$ means that for every vertex $x$ of $G$, there are vertices $y_1,y_2,y_3,y_4$ such that  these vertices together with $x$ induce the graph $W$ shown in Fig.~\ref{operetta}. We say that $W$ is a \emph{$\phi$-witness subgraph rooted in $x$}. 

\begin{figure}[ht]
\centering\scalebox{0.75}{\input{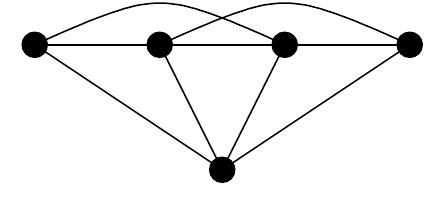_t}}
\caption{The $\phi$-witness graph $W$.
\label{operetta}}
\end{figure} 

We show hardness for \textsc{Edge-Editing to $\phi$} by reducing the \textsc{Set Cover} problem:\medskip

  \defparproblem{\sc Set Cover}{A family of sets $\mathcal{S}$ over the universe $U$ and a positive integer $k$.}{$k$}{Is there a subfamily $\mathcal{S}^*\subseteq \mathcal{S}$ of size at most $k$ that \emph{covers} $U$, that is, every element of $U$ is in one of the set of $\mathcal{S}^*$?
}

\medskip

\noindent It is well-known that \textsc{Set Cover} is ${\sf W}[2]$-hard when parameterized by $k$~\cite{DowneyF13}.

\begin{figure}[ht]
\centering\scalebox{0.75}{\input{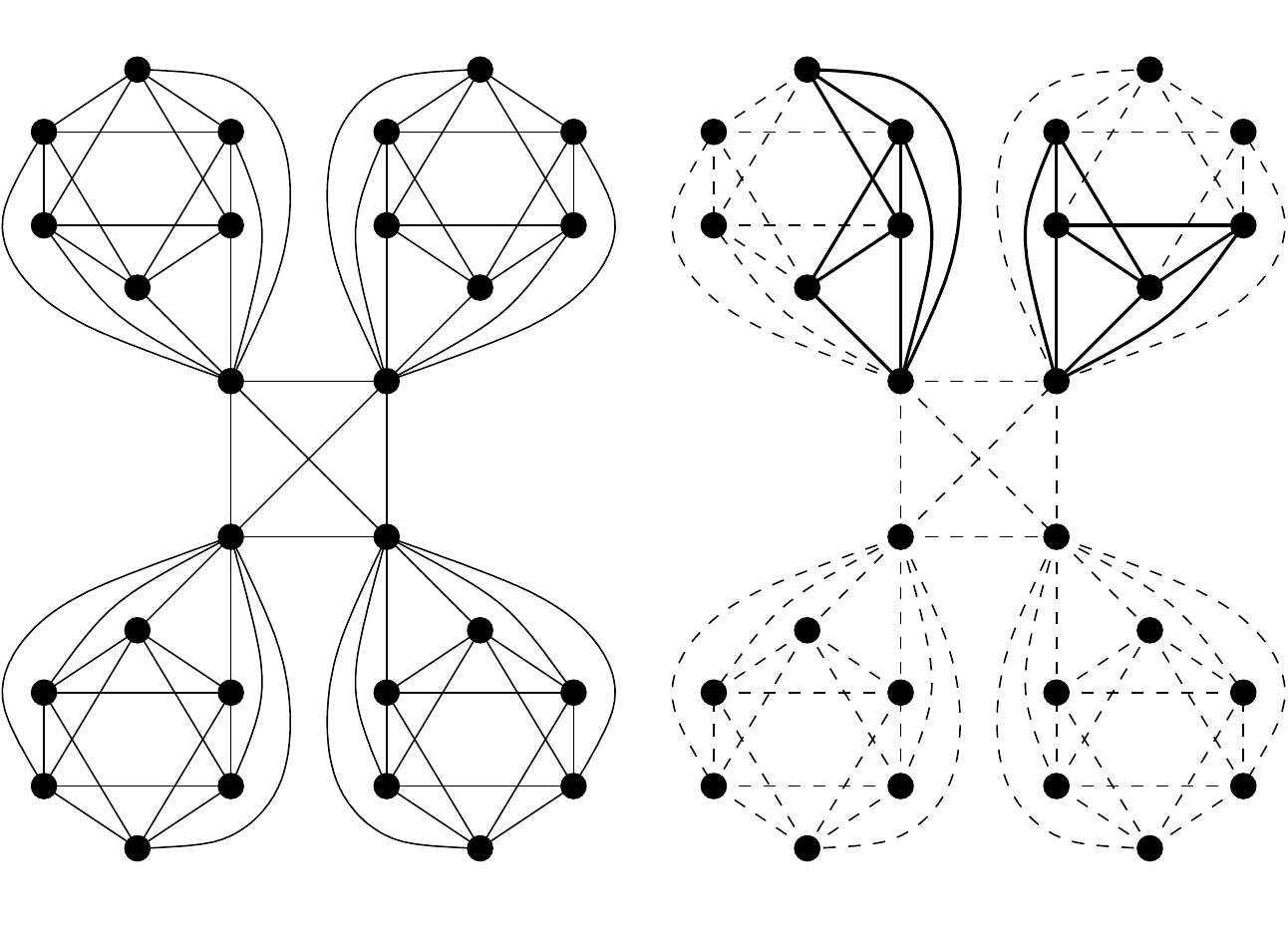_t}}
\caption{Construction of $H_i$ and the witness subgraphs for the vertices of $H_i$. The witness subgraphs are shown by thick lines and the other edges are shown by dashed lines. 
\label{uttering}}
\end{figure} 

Let $(U,\mathcal{S},k)$ be an instance \textsc{Set Cover}, $\mathcal{S}=\{S_1,\ldots,S_m\}$ and $U=\{u_1,\ldots,u_n\}$. We construct the graph $G$ as follows.
\begin{itemize}
\item For each $i\in\{1,\ldots,m\}$, construct the graph $H_i$ with 4 \emph{root vertices} $s_i^1,s_i^2,s_i^3,s_i^4$ as it is shown in Fig.~\ref{uttering} a).
\item For each $j\in\{1,\ldots,n\}$, construct $2k+1$ vertices $u_j^0,\ldots,u_j^{2k}$ and make them adjacent to the root vertices of all the gadgets $H_r$ such that the element $u_j$ of the universe $U$ is in the set $S_r\in \mathcal{S}$. 
\end{itemize}

We claim that $(U,\mathcal{S},k)$ is a {\sf yes}-instance \textsc{Set Cover} if and only if $(G,k)$ is a {\sf yes}-instance of \textsc{Edge-Editing to $\phi$}.  

Suppose that $(U,\mathcal{S},k)$ is a {\sf yes}-instance \textsc{Set Cover}. Let $\mathcal{S}^*\subseteq \mathcal{S}$ be a family of size at most $k$ that covers $U$. We construct the set of edges $F$ of $G$ as follows. For every $S_i\in\mathcal{S}^*$, we include the edge $s_i^1s_i^4$ of the gadget $H_i$ in $F$. Clearly, $|F|\leq k$. Let $G'=G\bigtriangleup F=G-F$. We show that $G'\models \phi$. Recall that we have to show that for every $x\in V(G')$, there are $y_1,y_2,y_3,y_4$, such that $G[\{x,y_1,y_2,y_3,y_3\}]$ is a $\phi$-witness subgraph rooted in $x$. 
For $x\in\bigcup_{i=1}^mV(H_i)$, such subgraphs are shown in Fig.~\ref{uttering} b). Let $x=u_j^h$ for $j\in\{1,\ldots,n\}$ and $h\in\{0,\ldots,2k\}$. The element $u_j\in U$ is covered by some set $S_i\in\mathcal{S}^*$. Since $s_i^1s_i^4\in F$, we have that $G[u_j^h,s_i^1,s_i^2,s_i^3,s_i^4]$ is  a $\phi$-witness subgraph rooted in $x$. We conclude that $G'\models \phi$ and, therefore, 
$(G,k)$ is a {\sf yes}-instance of \textsc{Edge-Editing to $\phi$}.

Assume that $(G,k)$ is a {\sf yes}-instance of \textsc{Edge-Editing to $\phi$}. Then there is $F\subseteq \binom{V(G)}{2}$ with $|F|\leq k$ such that for $G'=G\bigtriangleup F$, it holds that $G'\models\phi$. Consider the auxiliary graph $Q=(V(G),F)$. For $i\in\{1,\ldots,m\}$, let $\delta_i=\sum_{v\in V(H_i)}d_Q(v)$. We define $\mathcal{S}^*=\{S_i\mid 1\leq i\leq m,~\delta_i\geq 2\}$. Because $|F|\leq k$, $\sum_{i=1}^n\delta_i\leq 2k$ and, therefore, $|\mathcal{S}^*|\leq k$. We claim that $\mathcal{S}^*$ covers $U$.
To obtain a contradiction, assume that there is $j\in\{1,\ldots,n\}$ such that $u_j$ is not covered by $\mathcal{S}^*$.  Since $|F|\leq k$, there is $h\in\{0,\ldots,2k\}$ such that the vertex $u_j^h$ is not incident to the pairs of $F$. Because $G'\models \phi$, there is a  $\phi$-witness subgraph rooted in $x=u_j^h$. Hence, there are $y_1,y_2,y_3,y_4\in V(G)$ such that $G[\{x,y_1,y_2,y_3,y_4\}]$ is a  $\phi$-witness subgraph. Notice that $y_1,y_2,y_3,y_4\in \cup_{i=1}^mV(H_i)$. Observe that if $y_s\in V(H_i)$, then $F\cap\binom{V(H_i)}{2}=\emptyset$, because $\delta_i\leq 1$. Hence, it cannot happen that $y_1,y_2,y_3,y_4\in V(H_i)$ for some $i\in\{1,\ldots,m\}$, because it would mean that $\{y_1,y_2,y_3,y_4\}=\{s_i^1,s_i^2,s_i^3,s_i^4\}$ but $G'[\{s_i^1,s_i^2,s_i^3,s_i^4\}]=K_4$, a contradiction. Therefore, there are distinct $i,i'\in\{1,\ldots,n\}$ such that $V(H_i)\cap\{y_1,y_2,y_3,y_4\}\neq\emptyset$ and $V(H_{i'})\cap\{y_1,y_2,y_3,y_4\}\neq\emptyset$.   Since $\delta_i,\delta_{i'}\leq 1$, there is a unique pair $ab$ of $F$ such that $a\in V(H_i)$ and $b\in V(H_{i'})$. Moreover, $ab$ is a bridge of $G'[\cup_{s=1}^mV(H_s)]$ and, therefore, $ab$ is a bridge of $G'[\{y_1,y_2,y_3,y_4\}]$. This contradicts the fact that  $W-x$ is 2-connected. We conclude that    
$\mathcal{S}^*$ covers $U$. Hence,  $(U,\mathcal{S},k)$ is a {\sf yes}-instance \textsc{Set Cover}.

This concludes  the ${\sf W}[2]$-hardness proof for \textsc{Edge-Editing to $\phi$}. To show that  \textsc{Edge-Removal to $\phi$} is ${\sf W}[2]$-hard when parameterized by $k$, we use the same reduction. Note that to show that if $(U,\mathcal{S},k)$ is a {\sf yes}-instance of \textsc{Set Cover}  then $(G,k)$ is a {\sf yes}-instance of \textsc{Edge-Editing to $\phi$}, we constructed $F\subseteq E(G)$, that is, we proved that $(G,k)$ is a {\sf yes}-instance of \textsc{Edge-Removal to $\phi$}.
\end{proof}

Lemma~\ref{bringing} and Observation~\ref{adorning} imply Theorem~\ref{narcotic} (ii). To show the claim for \textsc{Vertex-Removal to $\phi$} we 
combain Lemma~\ref{bringing} with Lemma~\ref{dispense} and obtain the following lemma that implies Theorem~\ref{magnetic} (ii).

\begin{lemma}\label{furrowed} There is a constant $c$ such that 
there is an FOL-formula $\phi\in\Pi_3$ without free variables that has  at most $c$ variables  such that  \textsc{Vertex-Removal to $\phi$} is ${\sf W}[2]$-hard. 
\end{lemma}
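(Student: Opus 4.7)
The plan is to apply Lemma~\ref{dispense} directly to the formula produced in Lemma~\ref{bringing}. Let $\phi$ be the $\Pi_2$-formula of Lemma~\ref{bringing}, namely
\[
\phi=\forall x\exists y_1\exists y_2\exists y_3\exists y_4\,\chi,
\]
where $\chi$ is quantifier-free and has $5$ variables in total. Since $\phi$ begins with a $\forall$ quantifier, $\phi$ is $\forall$-containing in the sense required by Lemma~\ref{dispense}, and $\phi\in\Pi_2$ with $|\phi|=\Oh(1)$.

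First, I would invoke Lemma~\ref{dispense} with $s=2$ to obtain an FOL-formula $\psi$ without free variables such that (a) $\psi\in\Pi_{3}$ (since $\phi\in\Pi_2$), (b) $|\psi|=\mathrm{poly}(|\phi|)$, and (c) there is a polynomial-time reduction from \textsc{Edge-Removal to $\phi$} to \textsc{Vertex-Removal to $\psi$} that preserves the parameter $k$. Because $|\phi|$ is an absolute constant, $|\psi|$ is bounded by an absolute constant as well; in particular, the number of variables of $\psi$ is bounded by some universal constant $c$.

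Next, I would combine (c) with Lemma~\ref{bringing}: the latter asserts that \textsc{Edge-Removal to $\phi$} is ${\sf W}[2]$-hard when parameterized by $k$, and since the reduction of Lemma~\ref{dispense} is polynomial-time and preserves $k$, it is in particular a parameterized reduction. Therefore \textsc{Vertex-Removal to $\psi$} is ${\sf W}[2]$-hard when parameterized by $k$, proving the lemma.

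Strictly speaking, this argument does all the work by plumbing together two results already established, so there is no real obstacle. The only point requiring a tiny bit of care is checking the hypotheses of Lemma~\ref{dispense}: we need $\phi\in\Sigma_s\cup\Pi_s$ for some $s\geq 1$ (here $s=2$, so $\psi\in\Pi_{s+1}=\Pi_3$ as promised), and $\phi$ must be $\forall$-containing, which holds because the outermost quantifier of $\phi$ is $\forall x$. Everything else (polynomial size of $\psi$, preservation of $k$, and equivalence of the instances) is provided verbatim by the conclusions of Lemma~\ref{dispense}, so the proof reduces to a one-line citation chain.
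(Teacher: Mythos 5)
Your proof is correct and is essentially the paper's own argument: the paper obtains Lemma~\ref{furrowed} by precisely this one-line combination of Lemma~\ref{bringing} (which supplies the $\forall$-containing constant-size formula $\phi\in\Pi_2$ with ${\sf W}[2]$-hard \textsc{Edge-Removal to $\phi$}) with Lemma~\ref{dispense} (which lifts it, preserving the parameter, to a constant-size $\psi\in\Pi_3$ for \textsc{Vertex-Removal}). Your checks of the hypotheses of Lemma~\ref{dispense} (that $\phi$ is $\forall$-containing, has no free variables, and lies in $\Pi_s$ with $s=2\geq 1$) and the observation that a polynomial-time, parameter-preserving reduction is in particular a parameterized reduction are exactly what is needed.
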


\subsection{Kernelization lower bounds}\label{holiness}
In this subsection we obtain the kernelization lower bounds for   \textsc{Edge-Removal (Editing) to $\phi$} and \textsc{Vertex-Removal to $\psi$}.

First, we show the lower bounds for  \textsc{Edge-Removal to $\phi$} and  \textsc{Edge-Editing to $\phi$}  for $\Pi_1$-formulas.
To do it, we use the known results about kernelization lower bounds for the \textsc{$H$-Free Edge Removal} and \textsc{$H$-Free Editing}. Recall that  for a graph $H$, \textsc{$H$-Free Edge Removal} (\textsc{$H$-Free Editing}) asks, given a graph $G$ and a nonnegative integer $k$, whether there is a set of edges $F$ (a set $F\subseteq\binom{V(G)}{2}$ respectively) of size at most $k$ such that $G-F$ ($G\bigtriangleup F$ respectively) does not contain an induced subgraph isomorphic to $H$. Since the property that a graph $G$ has no induced subgraph isomorphic to $H$ can be expressed by an FOL-formula $\phi_H\in \Pi_1$  that has $|V(H)|$ variables,  
 \textsc{$H$-Free Edge Removal} and \textsc{$H$-Free Editing} can be written as  \textsc{Edge-Removal to $\phi_H$} and  \textsc{Edge-Editing to $\phi_H$}
 respectively.
 The first kernelization lower bounds for   \textsc{$H$-Free Edge Removal} and \textsc{$H$-Free Editing} were obtained by Kratsch and Wahlstr{\"{o}}m in~\cite{KratschW13} who proved that there are graphs $H$ for which these problems do not admit polynomial kernels unless ${\sf NP}\subseteq{\sf coNP}/{\sf poly}$. Some further results were obtained by Guillemot et al.~\cite{GuillemotHPP13}. In~\cite{CaiC15} Cai and Cai  completely characterized the cases when the problems have no polynomial kernels if $H$ is a path or cycle or is 3-connected graph up to the conjecture that ${\sf NP}\not\subseteq{\sf coNP}/{\sf poly}$.  In particular, they proved that \textsc{$H$-Free Edge Removal} and \textsc{$H$-Free Editing} do not have polynomial kernels if $H=C_4$ unless ${\sf NP}\subseteq{\sf coNP}/{\sf poly}$. This immediately yields the following lemma.

\begin{lemma}
 \label{supports}
There is an FOL-formula $\phi\in\Sigma_1$ without free variables that has $5$ variables such that  
\textsc{Edge-Removal to $\phi$} and  \textsc{Edge-Editing to $\phi$}
have no polynomial kernels  unless  ${\sf NP}\subseteq{\sf coNP}/{\sf poly}$.
\end{lemma}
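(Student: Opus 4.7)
The plan is to derive the lemma by a trivial parameter-preserving reduction from the well-studied \textsc{$C_4$-Free Edge Removal} (respectively \textsc{$C_4$-Free Editing}) problem, for which Cai and Cai~\cite{CaiC15} have established that no polynomial kernel exists unless ${\sf NP}\subseteq{\sf coNP}/{\sf poly}$. The strategy exploits the fact that the property ``$G$ has no induced $C_4$'' is directly expressible as a single-block first-order sentence over $|V(C_4)|=4$ vertex variables, comfortably within the $5$-variable budget of the statement; in particular the natural formula has a purely universal prefix, matching the prefix-class conventions established in Section~\ref{consists} and the setup paragraph preceding this lemma.

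Concretely, I would set
\[
\phi \;=\; \forall x_1 \forall x_2 \forall x_3 \forall x_4 \; \neg\, \chi(x_1,x_2,x_3,x_4),
\]
where $\chi(x_1,x_2,x_3,x_4)$ is the quantifier-free conjunction asserting that $x_1,\ldots,x_4$ are pairwise distinct and induce a $C_4$, namely $x_1 \sim x_2$, $x_2 \sim x_3$, $x_3 \sim x_4$, $x_4 \sim x_1$ together with $\neg(x_1 \sim x_3)$ and $\neg(x_2 \sim x_4)$ and the six distinctness atoms. Then for every graph $H$, $H \models \phi$ if and only if $H$ contains no induced copy of $C_4$. If a fifth variable is required for bookkeeping, one may append a harmless clause such as a dummy quantification inside the same block, without changing the prefix class.

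The identity map $(G,k) \mapsto (G,k)$ therefore realizes a polynomial-time, parameter-preserving reduction both from \textsc{$C_4$-Free Edge Removal} to \textsc{Edge-Removal to $\phi$} and from \textsc{$C_4$-Free Editing} to \textsc{Edge-Editing to $\phi$}: for any $F \subseteq E(G)$ (respectively $F \subseteq \binom{V(G)}{2}$), $G - F \models \phi$ (respectively $G \bigtriangleup F \models \phi$) iff the modified graph is $C_4$-free. Any polynomial kernel for the target problem would transport back through this identity reduction to yield one for the Cai-Cai source problem, a contradiction under ${\sf NP} \not\subseteq {\sf coNP}/{\sf poly}$. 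The only items to verify are the prefix class of $\phi$ and its variable count, both of which are manifest from the explicit formula above; there is no substantive obstacle, so the main work of the proof is really deferring to the Cai-Cai result via the reduction schema.
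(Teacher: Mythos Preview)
Your approach is exactly the paper's: the paragraph immediately preceding the lemma observes that ``$G$ has no induced $C_4$'' is expressible by a $\Pi_1$ formula $\phi_{C_4}$ and then invokes the Cai--Cai result~\cite{CaiC15} for $H=C_4$, declaring that ``this immediately yields the following lemma''; your explicit formula and identity reduction spell out precisely this. Note that the statement as printed contains a typo---it should read $\phi\in\Pi_1$, not $\Sigma_1$ (a $\Sigma_1$ formula would contradict Lemma~\ref{presents}), and the natural $\phi_{C_4}$ has $4$ variables rather than $5$; you handle both points correctly by constructing a universal formula and noting that a dummy fifth variable can be added if desired.
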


Lemma~\ref{supports} and Observation~\ref{adorning} prove Theorem~\ref{unguided} (ii). 
Using Lemma~\ref{dispense}, we obtain the following lemma for \textsc{Vertex-Removal to $\phi$}.

\begin{lemma}\label{conceive} 
There is a constant $c$ such that 
there is an FOL-formula $\phi\in\Pi_2$ without free variables that has  at most $c$ variables  such that  \textsc{Vertex-Removal to $\phi$} is  has no polynomial kernel unless  ${\sf NP}\subseteq{\sf coNP}/{\sf poly}$.
\end{lemma}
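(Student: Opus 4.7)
The strategy is to combine Lemma~\ref{dispense} with Lemma~\ref{supports}. Let $\phi_0$ denote the five-variable formula witnessing Lemma~\ref{supports}; in terms of graphs it expresses $C_4$-freeness and is naturally written as a universally quantified Boolean combination of adjacency predicates, so $\phi_0 \in \Pi_1$ and in particular $\phi_0$ is $\forall$-containing. Thus the hypotheses of Lemma~\ref{dispense} are satisfied with $s=1$.

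Applying Lemma~\ref{dispense} to $\phi_0$ yields a formula $\psi \in \Pi_{s+1} = \Pi_2$ without free variables, of size $|\psi| = \mathrm{poly}(|\phi_0|)$ (so $\psi$ has at most some absolute constant $c$ variables), together with a polynomial-time reduction that maps every instance $(G,k)$ of \textsc{Edge-Removal to $\phi_0$} to an equivalent instance $(G',k)$ of \textsc{Vertex-Removal to $\psi$} with the \emph{same} parameter $k$.

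Because the reduction runs in polynomial time, outputs an instance of polynomial size, and leaves the parameter untouched, it is a polynomial parameter transformation. By the standard composition/PPT framework \cite{BodlaenderDFH09,BodlaenderJK14}, any hypothetical polynomial kernel for \textsc{Vertex-Removal to $\psi$} could be post-composed with this PPT to yield a polynomial kernel for \textsc{Edge-Removal to $\phi_0$}, contradicting Lemma~\ref{supports}. Hence \textsc{Vertex-Removal to $\psi$} admits no polynomial kernel unless ${\sf NP}\subseteq {\sf coNP}/{\sf poly}$, which is exactly the statement of Lemma~\ref{conceive}.

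All genuine technical difficulty — in particular, the simulation of edge deletions by vertex deletions while increasing the quantifier-alternation count by at most one and while keeping the number of variables bounded by a constant — has already been packaged inside Lemma~\ref{dispense}, and the lower bound for the base problem is supplied by Lemma~\ref{supports}. Consequently the only work remaining here is verifying that $\phi_0$ is indeed $\forall$-containing (trivial from its $\Pi_1$ form) and that a parameter-preserving polynomial reduction is a PPT (immediate from the definition); I do not anticipate any non-routine obstacle in the argument.
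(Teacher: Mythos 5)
Your proposal is correct and is precisely the argument the paper intends (the paper leaves the proof of Lemma~\ref{conceive} implicit, simply stating that it follows from Lemmata~\ref{dispense} and~\ref{supports}). You also correctly read the formula of Lemma~\ref{supports} as a $\Pi_1$ (hence $\forall$-containing) formula expressing $C_4$-freeness, consistent with Theorem~\ref{unguided}(ii) and with the surrounding discussion, even though the stated lemma has an apparent typo writing ``$\Sigma_1$''.
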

 
Our final task is to show that it is unlikely that  \textsc{Vertex-Removal to $\phi$} has a polynomial kernel for $\Sigma_2$-formulas. We do it by using the {\sl cross-composition} technique introduced by Bodlaender, Jansen and Kratsch~\cite{BodlaenderJK14} (see also~\cite{CyganFKLMPPS15} for the introduction to the technique). Here we only briefly sketch the main notions that we need to apply it.

Let $\Sigma$ be a finite alphabet.  An equivalence relation $\mathcal{R}$ on the set of strings $\Sigma^*$ is called a \emph{polynomial equivalence relation} if the following two conditions hold:
\begin{itemize}
\item[i)] there is an algorithm that given two strings $x,y\in\Sigma^*$ decides whether $x$ and $y$ belong to
the same equivalence class in time polynomial in $|x|+|y|$,
\item[ii)] for any finite set $S\subseteq\Sigma^*$, the equivalence relation $\mathcal{R}$ partitions the elements of $S$ into a
number of classes that is polynomially bounded in the size of the largest element of $S$.
\end{itemize}

Let $\mathcal{L}\subseteq\Sigma^*$ be a problem, let $\mathcal{R}$ be a polynomial
equivalence relation on $\Sigma^*$, and let $\mathcal{P}\subseteq\Sigma^*\times\mathbb{N}$   
be a parameterized problem.  An \emph{OR-cross-composition of $\mathcal{L}$ into $\mathcal{P}$} (with respect to $\mathcal{R}$) is an algorithm that, given $t$ instances $I_1,I_2,\ldots,I_t\in\Sigma^*$ 
of $\mathcal{L}$ belonging to the same equivalence class of $\mathcal{R}$, takes time polynomial in
$\sum_{i=1}^t|I_i|$ and outputs an instance $(I,k)\in \Sigma^*\times \mathbb{N}$ such that:
\begin{itemize}
\item[i)] the parameter value $k$ is polynomially bounded in $\max\{|I_1|,\ldots,|I_t|\} + \log t$,
\item[ii)] the instance $(I,k)$ is a {\sf yes}-instance of $\mathcal{P}$ if and only there is $i\in\{1,\ldots,t\}$ such that $I_i$ is a {\sf yes}-instance of $\mathcal{L}$.
\end{itemize}
It is said that $\mathcal{L}$ \emph{OR-cross-composes into} $\mathcal{P}$ if a cross-composition
algorithm exists for a suitable relation $\mathcal{R}$.

Bodlaender, Jansen and Kratsch~\cite{BodlaenderJK14} proved the following theorem.

\begin{theorem}[\!\!\!\cite{BodlaenderJK14}]\label{typifies}
If an {\sf NP}-hard problem $\mathcal{L}$ OR-cross-composes into the parameterized problem $\mathcal{P}$,
then $\mathcal{P}$ does not admit a polynomial kernelization unless
${\sf NP}\subseteq{\sf coNP}/{\sf poly}$.
\end{theorem}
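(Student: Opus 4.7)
The plan is to reduce the theorem to the Fortnow--Santhanam impossibility result for OR-distillations of $\mathsf{NP}$-hard languages. Suppose towards a contradiction that $\mathcal{P}$ admits a polynomial kernelization producing instances of total size at most $p(k)$ for some polynomial $p$. Combined with the hypothesized OR-cross-composition, this should yield an OR-distillation of $\mathcal{L}$ whose output size is polynomial in the largest input, which is ruled out for $\mathsf{NP}$-hard languages unless $\mathsf{NP}\subseteq\mathsf{coNP}/\mathsf{poly}$.

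To carry this out, I would fix an input length $n$ and consider $t$ strings $I_1,\ldots,I_t \in \Sigma^*$ of $\mathcal{L}$ of length at most $n$. By property (ii) of the polynomial equivalence relation $\mathcal{R}$, these fall into at most $q_0(n)$ classes for some polynomial $q_0$, so by pigeonhole at least $t/q_0(n)$ of the $I_i$ lie in a common class $C$. Restricting to the sub-collection within $C$, I apply the OR-cross-composition to produce $(I,k)$ with $k\leq q_1(n+\log t)$ for some polynomial $q_1$, then apply the hypothetical kernelization to obtain $(I',k')$ of total size at most $p(q_1(n+\log t))$. Choosing $t = 2^{n^{c}}$ for a sufficiently large constant $c$, the surviving sub-collection of $C$ still contains $2^{n^{\Omega(1)}}$ instances while $|I'|+k'$ remains polynomial in $n$, since $\log t = n^{c}$ dominates the $n$ term.

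By the equivalence condition of the cross-composition, $(I',k')$ is a yes-instance of the classical language underlying $\mathcal{P}$ iff at least one of the selected $I_i$ is a yes-instance of $\mathcal{L}$. Scaling this construction (one can union the sub-composites over the polynomially many $\mathcal{R}$-classes, or pad them into a single distillation target, without destroying the polynomial size bound) yields a polynomial-time algorithm that, given $t$ arbitrary $\mathcal{L}$-instances of length at most $n$, outputs a string of length polynomial in $n$ whose membership in a fixed language decides the OR of the $t$ inputs. This is precisely an OR-distillation of $\mathcal{L}$, and the Fortnow--Santhanam theorem then forces $\mathsf{NP}\subseteq\mathsf{coNP}/\mathsf{poly}$, contradicting $\mathsf{NP}$-hardness of $\mathcal{L}$.

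The hard part of the argument is not the construction above, which is essentially a bookkeeping composition of the cross-composition algorithm with the hypothetical kernelization. The real content lies in the underlying Fortnow--Santhanam impossibility theorem (whose proof uses a communication-complexity argument in an essential way) which I would invoke as a black box, together with the mild subtlety of handling the $\mathcal{R}$-equivalence restriction: one must ensure that the polynomial number of equivalence classes can be absorbed by pigeonholing or by an auxiliary OR without inflating the output size beyond $\mathrm{poly}(n)$.
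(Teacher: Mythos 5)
The paper does not prove this theorem; it cites it directly from Bodlaender, Jansen, and Kratsch~\cite{BodlaenderJK14}, so there is no in-paper proof to compare against. Your sketch does follow the route that BJK actually take: compose a hypothetical polynomial kernel with the cross-composition to manufacture an OR-distillation and then invoke the Fortnow--Santhanam impossibility theorem. At a high level this is the right argument, and the key technical observation you need --- that the Fortnow--Santhanam-style bound tolerates an output size of $\mathrm{poly}(n+\log t)$ rather than merely $\mathrm{poly}(n)$, which is exactly why the $\log t$ slack in the cross-composition definition is harmless --- is present.

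Two points are off, though. First, the last sentence is a logical slip: $\mathsf{NP}\subseteq\mathsf{coNP}/\mathsf{poly}$ does not ``contradict $\mathsf{NP}$-hardness of $\mathcal{L}$.'' The argument is not a reductio at all. Having assumed a polynomial kernel, you derive $\mathsf{NP}\subseteq\mathsf{coNP}/\mathsf{poly}$, and that implication \emph{is} the theorem (in contrapositive form); there is nothing left to contradict. Second, the pigeonhole step is a distraction. You do not want to restrict to the one large class $C$ and discard the rest --- the OR of all $t$ inputs must be preserved, so every instance has to survive into the output. The correct move, which you do mention but only as an afterthought, is to partition the $t$ inputs into the $\leq q_0(n)$ classes, cross-compose and kernelize within \emph{each} class, and emit the polynomially-long list of kernels as a single instance of the ``OR of a list of $\mathcal{P}$-instances'' language. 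The Fortnow--Santhanam theorem allows the distillation target $L'$ to be an arbitrary language, so this is legitimate. Once you drop the pigeonhole framing and the artificial choice $t=2^{n^c}$ (the bound holds for all $t$ and the usual proof does not need a specific scale), the remaining content is correct.
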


 \begin{lemma}\label{solution} There is  $\phi\in \Sigma_2$ without free variables that has  $3$ variables such that  \textsc{Vertex-Removal to $\phi$} has no polynomial kernel  unless  ${\sf NP}\subseteq{\sf coNP}/{\sf poly}$.
\end{lemma}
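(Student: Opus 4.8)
The plan is to apply the cross-composition framework (Theorem~\ref{typifies}), composing the \textsf{NP}-hard \textsc{Vertex Cover} problem into \textsc{Vertex-Removal to $\phi$} for the $3$-variable formula
\[
\phi=\exists x\forall y\forall z\,\neg\bigl(\neg(x=y)\wedge\neg(x=z)\wedge\neg(y=z)\wedge\neg(x\sim y)\wedge\neg(x\sim z)\wedge(y\sim z)\bigr),
\]
which clearly lies in $\Sigma_2$ and has no free variables. Semantically, $G\models\phi$ means that $G$ has a vertex $x$ with $V(G)\setminus N[x]$ independent; hence \textsc{Vertex-Removal to $\phi$} on $(G,k)$ asks whether there is a vertex $x$ and a set $S$ of at most $k$ vertices disjoint from $N_G[x]$ such that $S$ is a vertex cover of $G[V(G)\setminus N_G[x]]$. (The problem is in \textsf{NP} since model checking a fixed constant-size formula is polynomial, so the conclusion of Theorem~\ref{typifies} is meaningful.)

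For the polynomial equivalence relation $\mathcal R$ I would put all malformed inputs into a single class (where the composition just outputs a fixed no-instance) and otherwise group \textsc{Vertex Cover} instances by the pair $(|V(H)|,k)$; this is easily seen to be a polynomial equivalence relation. Fix a class with common values $n_0,k_0$, and dispose of degenerate cases first: if $k_0\ge n_0-1$ then every instance is a yes-instance (a vertex cover of size $n_0-1$ always exists), and if some input graph has no edge it is a yes-instance, so in both situations output a trivial yes-instance. Otherwise $k_0+2\le n_0$, so no-instances of \textsc{Vertex Cover} with $n_0$ vertices and parameter $k_0$ exist (e.g.\ $K_{k_0+2}$ together with isolated vertices), and I would pad the instance list with such no-instances until the number $t$ of instances satisfies $t>k_0+1$; this does not change the \textsc{OR} and costs only polynomial time. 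Now construct $G$: take vertex-disjoint copies $G_1,\dots,G_t$ of the input graphs on vertex sets $V_1,\dots,V_t$, add no edges between distinct $V_i,V_j$, add \emph{selector} vertices $c_1,\dots,c_t$ forming a clique, and join each $c_i$ to all of $V_j$ for $j\neq i$ (and to no vertex of $V_i$), so that $V(G)\setminus N_G[c_i]=V_i$. Output $(G,k_0)$; the parameter $k_0\le n_0\le\max_i|I_i|$ is polynomially bounded as required.

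For correctness, if some $G_i$ has a vertex cover $C$ with $|C|\le k_0$, take $S=C\subseteq V_i$; then $V(G-S)\setminus N_{G-S}[c_i]=V_i\setminus C$ is independent, so $G-S\models\phi$. Conversely, suppose $|S|\le k_0$ and $G-S\models\phi$, witnessed by a vertex $x^*\notin S$, so that $T:=(V(G)\setminus N_G[x^*])\setminus S$ is independent in $G$. If $x^*\in V_i$, then $c_i\notin N_G[x^*]$ (while $c_j\in N_G[x^*]$ for $j\neq i$), so $T$ contains $(\{c_i\}\cup\bigcup_{j\neq i}V_j)\setminus S$; since $c_i$ is adjacent to every vertex of $\bigcup_{j\neq i}V_j$ and every $G_j$ has an edge, independence of $T$ forces $c_i\in S$ and each $S\cap V_j$ ($j\neq i$) to be a vertex cover of $G_j$, whence $|S|\ge 1+\sum_{j\neq i}\tau(G_j)\ge t>k_0+1$, a contradiction. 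Hence $x^*=c_i$ for some $i$, and then $T=V_i\setminus S$ being independent means $S\cap V_i$ is a vertex cover of $G_i$ of size at most $k_0$. This is exactly the \textsc{OR} property, so Theorem~\ref{typifies} yields the claim.

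The routine parts (the formula's membership in $\Sigma_2$, the construction, the forward direction) are immediate; the step requiring the most care is the reverse direction, specifically excluding the case $x^*\in V_i$. This is precisely why the selector vertices must be pairwise adjacent (so each $c_j$ lies in $N_G[c_i]$ rather than in $V(G)\setminus N_G[c_i]$) and why the list of instances is padded so that $t>k_0+1$; making these invariants genuinely hold is what forces the preliminary case distinction handling malformed inputs, large $k_0$, and edgeless gadgets. I would expect the write-up to spend most of its length getting that case analysis and the preprocessing fully airtight.
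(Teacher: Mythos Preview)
Your argument is correct and yields a valid OR-cross-composition, but it is essentially the complement of what the paper does. The paper takes
\[
\phi=\exists x\,\forall y\,\forall z\bigl[((x\sim y)\wedge(x\sim z))\rightarrow((y=z)\vee(y\sim z))\bigr],
\]
i.e.\ ``some vertex has a clique neighbourhood'', and cross-composes from \textsc{Clique}: for equivalent instances $(G_1,k),\dots,(G_t,k)$ on $n$ vertices it simply attaches to each $G_i$ a private set of $n-k+2$ vertices made adjacent to all of $V(G_i)$, and sets the new parameter to $k'=n-k$. The reverse direction is then a one-liner: a witness $x$ cannot lie in $V(G_i)$ because after only $n-k$ deletions it would still have two nonadjacent neighbours among the $n-k+2$ attached vertices, so $x$ must be one of those attached vertices, whose remaining neighbourhood $V(G_i)\setminus S$ is a clique of size at least $k$. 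Your formula and source problem are the complements of these (non-neighbourhood independent, \textsc{Vertex Cover}), and your shared selector clique with the anti-complete pattern replaces the paper's private universal vertices; this lets you keep the parameter equal to $k_0$, but at the price of the padding to $t>k_0+1$ and the preliminary case analysis, none of which the paper's gadget needs. One minor wording issue in your Case~2: ``independence of $T$ forces $c_i\in S$'' is not forced by independence alone---if $c_i\notin S$ then instead $\bigcup_{j\neq i}V_j\subseteq S$---but either branch already gives $|S|\ge t-1>k_0$, so the contradiction goes through and the proof stands.
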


\begin{proof}
We define the formula $\phi$ as follows:
\begin{equation*}
\phi=\exists x\forall y\forall z[((x\sim y)\wedge(x\sim z))\rightarrow((y=z)\vee(y\sim z))].
\end{equation*}
In terms of graphs, $G\models\phi$ means that there is a vertex $x$ whose neighborhood is a clique.

We consider the  \textsc{Clique} problem:\medskip

  \defproblemu{{\sc Clique}}{A graph $G$ and a positive integer $k$.}{Is there a clique in $G$ with at least $k$ vertices?}

\medskip

\noindent and show that \textsc{Clique} OR-cross-composes into \textsc{Vertex-Removal to $\phi$}.

We say that two instances $(G_1,k_1)$ and $(G_2,k_2)$ of \textsc{Clique} are \emph{equivalent} if $|V(G_1)|=|V(G_2)|$ and $k_1=k_2$. 

Let $(G_1,k),\ldots,(G_t,k)$ be equivalent instances of \textsc{Clique} where graphs have $n$ vertices. We construct the instance $(G',k')$ of  \textsc{Vertex-Removal to $\phi$} as follows.
\begin{itemize}
\item Construct disjoint copies of $G_1,\ldots,G_t$.
\item For every $i\in\{1,\ldots,t\}$, construct $n-k+2$ vertices $u_i^1,\ldots,u_i^{n-k+2}$ and make them adjacent to the vertices of $G_i$.
\item Set $k'=n-k$.
\end{itemize}

We claim that $(G',k')$ is a {\sf yes}-instance of \textsc{Vertex-Removal to $\phi$} if and only if there is $i\in\{1,\ldots,t\}$ such that $(G_i,k)$ is a {\sf yes}-instance of \textsc{Clique}.

Suppose that there is $i\in\{1,\ldots,t\}$ such that $(G_i,k)$ is a {\sf yes}-instance of \textsc{Clique}. Then $G_i$ has a clique $K$ of size $k$. Let $S=V(G)\setminus K$. Note that $|S|=n-k=k'$. Now for $x=u_i^1$, we have that the neighborhood of $x$ in $G'$ is the clique $K$, that is, $(G',k')$ is a {\sf yes}-instance of \textsc{Vertex-Removal to $\phi$}.

Assume that $(G',k')$ is a {\sf yes}-instance of \textsc{Vertex-Removal to $\phi$}. Then there is a set of vertices $S\subseteq V(G')$ of size at most $k'$ such that $(G'-S)\models\phi$. 
Let $G''=G'-S$. We have that there is $x\in V(G'')$ such that the neighborhood of $x$ in $G''$ is a clique. Then there is $i\in\{1,\ldots,t\}$ such that $x\in V(G_i)$ or $x\in\{u_i^1,\ldots,u_i^{n-k+2}\}$. Suppose that $x\in V(G_i)$. Since $|S|\leq k'=n-k$, $x$ is adjacent in $G''$ to at least two distinct vertices of $\{u_i^1,\ldots,u_i^{n-k+2}\}$ but these two verices are not adjacent. It implies that $x\in\{u_i^1,\ldots,u_i^{n-k+2}\}$ and the neighborhood of $x$ in $G''$ is $V(G_i)\setminus S$, that is, $K=V(G_i)-S$ is a clique. Because $|S|\leq k'$, we have that $|K|\geq n-k'=k$, that is, $(G_i,k)$ is a {\sf yes}-instance of \textsc{Clique}.

Since $|V(G')|=\Oh(nt)$ and $k'=\Oh(n)$, we conclude that \textsc{Vertex-Removal to $\phi$} has no polynomial kernel 
unless  ${\sf NP}\subseteq{\sf coNP}/{\sf poly}$ by Theorem~\ref{typifies}.
\end{proof}
 
We have that Lemmata~\ref{conceive} and \ref{solution} imply Theorem~\ref{consiste} (ii).

\section{Conclusion}\label{barracks}
In this paper we have provided necessary and sufficient conditions (subject to some  complexity assumptions) on
the fixed-parameter tractability, as well as polynomial kernelization, of graph modification problems to the properties expressible by an FOL-formula from a certain 
 prefix class.
  While we stated our results for undirected graphs, in fact, all our results could be rewritten for directed graphs.  In particular, the {\sf FPT} and kernelization algorithms work for directed graphs without any changes. For the hardness proofs, we need only  a minor modification.
Denote by $\Arc(x,y)$ the predicate for variables $x$ and $y$ meaning that $(x,y)$ is an arc of a directed graph.
Denote by $\vec{\phi}$ the FOL-formula on directed graphs obtained from an FOL-formula $\phi$ on undirected graphs by replacing every predicate $x \sim y$  with $\Arc(x,y)\vee\Arc(y,x)$. 
Then we have the following observation.

\begin{observation}\label{straight}
Let $G$ be the underlaying undirected graph of a directed graph $D$ and let $\phi$ be an FOL-formula on undirected graphs without free variables. Then $G\models\phi$ if and only if $D\models \vec{\phi}$.
\end{observation}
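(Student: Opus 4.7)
The plan is to prove the observation by a routine structural induction on the formula, once the semantics at the atomic level is handled by the very definition of $\vec{\phi}$. The key point is that the replacement $x\sim y \mapsto \Arc(x,y)\vee\Arc(y,x)$ is precisely the semantic bridge between the two structures: by definition of the underlying undirected graph, $\{x,y\}\in E(G)$ if and only if $(x,y)\in A(D)$ or $(y,x)\in A(D)$. So at the atomic level, for any assignment $\nu\colon\{x,y\}\to V(G)=V(D)$, we have $(G,\nu)\models x\sim y$ iff $(D,\nu)\models \Arc(x,y)\vee \Arc(y,x)$, and trivially $(G,\nu)\models x=y$ iff $(D,\nu)\models x=y$.

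Formally, I would define $\vec{\psi}$ for every subformula $\psi$ of $\phi$ by the same replacement rule, and prove by induction on the construction of $\psi$ the following strengthening that accommodates free variables: for every assignment $\nu$ of the free variables of $\psi$ to vertices of $V(G)=V(D)$, $(G,\nu)\models \psi$ iff $(D,\nu)\models \vec{\psi}$. The base case is precisely the atomic equivalence above. For Boolean connectives $\neg,\wedge,\vee,\rightarrow,\leftrightarrow$, the statement follows immediately from the induction hypothesis since the connectives are interpreted identically in both structures. For a quantified subformula $\exists x\,\psi$ (respectively $\forall x\,\psi$), one uses that the domains of quantification are the same set $V(G)=V(D)$, so the inductive hypothesis applied to each extension of $\nu$ by $x\mapsto v$ with $v\in V(G)=V(D)$ gives the desired equivalence.

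Specialising this to the case where $\psi=\phi$ has no free variables yields $G\models \phi$ iff $D\models\vec{\phi}$, which is what we need. There is no real obstacle here: the only substantive observation is the atomic-level match enforced by the definition of $\vec{\phi}$, and everything else is a textbook structural induction on formulas; the only minor caveat is to carry free-variable assignments through the induction rather than attempting induction directly on the prenex form, since the quantifier-free part may contain nested Boolean combinations of adjacency and equality predicates that must all be translated simultaneously.
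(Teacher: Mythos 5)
Your proof is correct; the paper states Observation~\ref{straight} without an explicit argument precisely because the routine structural induction you carry out (atomic equivalence by definition of the underlying undirected graph, then Boolean connectives and quantifiers over the common vertex set) is the intended and essentially unique justification. The caveat you note about doing induction on arbitrary subformulas with free-variable assignments, rather than on the prenex shape, is the right way to make the argument airtight.
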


Observation~\ref{straight} immediately implies that whenever  \textsc{Vertex Removal to $\phi$} or \textsc{Edge Removal/Completion/Editing to $\phi$} is hard ({\sf W[2]}-hard or does not have a polynomial kernel unless ${\sf NP}\subseteq{\sf coNP}/{\sf poly}$), the same holds for the variant of the problem on  directed graphs.
The straightforward reduction constructs a directed graph from an undirected graph $G$ by turning its edges to arcs by assigning arbitrary orientations. 

 Our results are for FOL-formulas. It would be very interesting to obtain a similar type of dichotomies for prefix classes of  \emph{Monadic Second Order Logic} (MSOL) formulas on graphs. MSOL is substantially richer and allows to express more interesting graph properties like connectivity that cannot be expressed in FOL. The crucial difference is that while \textsc{Model Checking} for FOL-formulas can be solved in polynomial time for formulas of bounded size  (see Theorem~\ref{appetite}), the problem for MSOL is well-known to be {\sf NP}-complete even for formulas whose size is bounded by a constant.

\paragraph*{Acknowledgments.} We are grateful to  P{\aa}l Drange for his very helpful remarks.

%\bibliographystyle{siam}
%\bibliography{FOL}

\end{document}